\let\csname equation*\endcsname\relax
\let\csname endequation*\endcsname\relax
\def\qed{\hfill $\Box$}
\theoremstyle{definition}
\newtheorem{defi}{Definition}[section]
\newtheorem{prop}{Proposition}[section]
\newtheorem{lemm}{Lemma}[section]
\newtheorem{ex}{Example}[section]
\newcommand{\wt}[1]{\widetilde{#1}}
\newcommand{\us}{\hspace{0.5pt}\rule[0pt]{4.35pt}{0.2pt}\hspace{0.2pt}}
\DeclareMathOperator{\des}{des}
\DeclareMathOperator{\row}{row}
\DeclareMathOperator{\trop}{trop}
\begin{document}
\title{Generalization of the $\epsilon$-BBS and the Schensted insertion algorithm}
\author[$\dagger$]{Katsuki Kobayashi}
\author[$\dagger$]{Satoshi Tsujimoto}
\affil[$\dagger$]{Department of Applied Mathematics and Physics, Graduate School of Informatics, Kyoto University, Kyoto, 606-8501, Japan}
\mathtoolsset{showonlyrefs=true}
\maketitle
\begin{abstract}
  The $\epsilon$-BBS is the family of solitonic cellular automata obtained via the ultradiscretization of the elementary Toda orbits,
  which is a parametrized family of integrable systems unifying the Toda equation and the relativistic Toda equation.
  In this paper, we derive the $\epsilon$-BBS with many kinds of balls and give its conserved quantities by the Schensted insertion algorithm which is introduced in combinatorics.
  To prove this, we extend birational transformations of the continuous elementary Toda orbits to the discrete hungry elementary Toda orbits.
    \\ \\
 \noindent{Keywords\/}: ultradiscrete system, Toda lattice, box-ball system, Robinson-Schensted-Knuth correspondence
\end{abstract}
\section{Introduction}
Discrete integrable systems have been actively studied from various viewpoints.
One of the most remarkable applications of discrete integrable systems is the derivation of integrable cellular automata via a limiting procedure called \textit{ultradiscretization}.
The earliest example of the integrable cellular automaton is Takahashi-Satsuma's box-ball system (BBS) \cite{TS}, which can be obtained from the ultradiscretization of the discrete Lotka-Volterra equation \cite{TTMS} and the discrete Toda equation \cite{NTT}.
Ultradiscretization also reveals an unexpected relationship between integrable systems and combinatorics.
For example, the ultradiscretization of the discrete Toda equation can be regarded as the Schensted insertion algorithm for Young tableaux if the variables are interpreted properly \cite{ET, NY}.
The BBS is also known to be closely related to the combinatorics.
In this paper, we focus on the result by Fukuda \cite{F} which shows that the P-symbol of the Robinson-Schensted-Knuth (RSK) correspondence is a conserved quantity of the BBS with many kinds of balls \cite{T} (it is called the generalized BBS in \cite{F}).

The objective of this paper is to show that the P-symbol of the RSK correspondence is a conserved quantity of a generalization of the $\epsilon$-BBS introduced in \cite{K}.
The $\epsilon$-BBS is the box-ball system obtained by ultradiscretizing the elementary Toda orbits \cite{FG1}, which is a family of integrable systems containing
the (ordinary) Toda lattice and the relativistic Toda lattice as special cases.
We consider what is known as the \textit{hungry extension} of the discrete elementary Toda orbits, by which we obtain the $\epsilon$-BBS with many kinds of balls (we call it the \textit{hungry $\epsilon$-BBS}).
Fukuda's original proof of the conservation of the P-symbol was based on the fact that the time evolution of the BBS with many kinds of balls is described by successive applications of elementary Knuth transformations.
However, this method cannot be extended to the $\epsilon$-BBS in an obvious way.
Thus, to prove the conservation of the P-symbol, we use birational transformations among the elementary Toda orbits \cite{FG1}.
It was shown in \cite{FG1} that transformations commute with the time evolution of (continuous) elementary Toda orbits.
In this paper, we show that the same transformations even commute with the time evolution of discrete hungry elementary Toda orbits, thus generalizing the result of \cite{FG1}.
Then, we show that Noumi-Yamada's geometric Schensted insertion \cite{NY} is invariant under this transformation.

The paper is organized as follows. In Section 2, we present some notations and review the basic properties of Young tableaux.
In Section 3, we derive the discrete hungry elementary Toda orbits and ultradiscretize them to obtain the hungry $\epsilon$-BBS.
In Section 4, we introduce and prove the main theorem of this paper, which states that the P-symbol of the RSK map is a conserved quantity of the hungry $\epsilon$-BBS.
Section 5 gives the conclusion.
\section{Preliminaries}
\subsection{Notations and basic properties of Young tableaux}
In this section, we provide some basic facts about Young tableaux and notations used throughout the paper.
Let $[m] = \{1, 2, ..., m\}$ be a set of $m$ letters equipped with the usual ordering on integers.
A finite sequence $v = v_1v_2 \cdots v_l$ using the letters $[m]$ is called a \textit{word}. A word $v$ is \textit{non-decreasing} if $v_k \leq v_{k+1}$ for each $1 \leq k < l$.
Let $\lambda = (\lambda_1, \lambda_2, ..., \lambda_k)$, where $\lambda_1 \geq \lambda_2 \geq \cdots \geq \lambda_k \geq 0, ~ (k \leq m)$ be a Young diagram.
A \textit{semi-standard tableau} (SST) of shape $\lambda$ is obtained by assigning a letter in $[m]$ to each box of $\lambda$ so as to satisfy the followings:
\begin{itemize}
  \item In each row, the letters are non-decreasing from left to right, and
  \item In each column, the letters are strictly increasing from top to bottom.
\end{itemize}
The \textit{Schensted insertion} of a letter $i \in [m]$ into an SST $T$ is defined as follows:
\begin{enumerate}
  \item Set $k := 1$ and $x := i$.
  \item Find the leftmost letter in the $k$-th row of $T$ that is greater than $x$. If no such letter is found, then append $x$ to the right end of the $k$-th row of $T$, and then terminate.
   If such a letter $j$ is found, then replace it with $x$ and set $x := j,~ k := k + 1$, then go back to the start of Step 2.
\end{enumerate}
The SST obtained by inserting a letter $i$ into an SST $T$ is denoted by $T \leftarrow i$.
The following is an example of the Schensted insertion:
  \begin{align}
    \begin{ytableau}
      1 & 1 & 2 & 3  \\
      2 & 3 & 3 & 4 \\
      3 & 4 & 6
    \end{ytableau} \quad
    \begin{ytableau}
      \none  \\
      \none[\longleftarrow] & \none[1] & \none[=] \\
      \none
    \end{ytableau} \quad
    \begin{ytableau}
      1 & 1 & 1 & 3  \\
      2 & 2 & 3 & 4 \\
      3 & 3 & 6 \\
      4
    \end{ytableau}.
  \end{align}
Let $w = w_1w_2\cdots w_l$ be a word. We define $T \leftarrow w$ as an SST obtained by $((((T \leftarrow x_1) \leftarrow x_2) \leftarrow \cdots ) \leftarrow x_l ) $.
Let $T$ be an SST. Denote each row of $T$ by $r_1, r_2, ..., r_k$. The \textit{row word} $w_{\row}(T)$ of $T$ is defined as $w_{\row}(T) = r_kr_{k-1}\cdots r_2r_1$.
A product of two SSTs $T$ and $T'$ is defined as $T \cdot T' := T \leftarrow w_{\row}(T')$. We will use the following property of this product later.
\begin{prop}
  The product defined above is associative, that is, for any SST $T_1, T_2$ and $T_3$, we have $(T_1 \cdot T_2) \cdot T_3 = T_1 \cdot (T_2 \cdot T_3)$.
\end{prop}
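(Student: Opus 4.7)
The plan is to reduce the associativity of $\cdot$ to the (trivial) associativity of word concatenation, using the classical \emph{plactic monoid} argument due to Knuth. First I would introduce the Knuth equivalence relation $\equiv$ on words over $[m]$, generated by the two elementary transformations $acb \equiv cab$ whenever $a\le b<c$ and $bac \equiv bca$ whenever $a<b\le c$. The key classical fact to be established (or cited) is that two words $u,v$ over $[m]$ are Knuth-equivalent if and only if $\emptyset \leftarrow u = \emptyset \leftarrow v$ as SSTs.

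Second, I would prove the bridge lemma: for any SST $T$ and any word $w$, the row word of the inserted tableau is Knuth-equivalent to the concatenation of row words, that is,
\begin{equation}
  w_{\row}(T \leftarrow w) \;\equiv\; w_{\row}(T) \cdot w .
\end{equation}
It suffices to prove this when $w$ is a single letter $x$, and then iterate. For a single letter, one follows the bumping path: the letter $x$ bumps some letter $y_1$ out of row 1, which bumps $y_2$ out of row 2, and so on. One checks row by row that the change in the row word resulting from a single bump can be realized by a sequence of elementary Knuth moves; this is the only computational step, and it is a standard but slightly fiddly case analysis splitting on whether the bumping step appends to the row or genuinely displaces a letter. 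Combined with the previous paragraph, the lemma implies that for any SST $S$,
\begin{equation}
  S \leftarrow w_{\row}(T \leftarrow w) \;=\; S \leftarrow w_{\row}(T) \leftarrow w ,
\end{equation}
since inserting a word is the same as inserting its letters one by one, and Knuth-equivalent words produce the same SST upon insertion into any fixed starting tableau $S$ (apply the characterization to the row words of $S$ concatenated with the two equivalent words).

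Finally, I would assemble associativity. Unfolding definitions,
\begin{equation}
  (T_1 \cdot T_2) \cdot T_3 \;=\; \bigl(T_1 \leftarrow w_{\row}(T_2)\bigr) \leftarrow w_{\row}(T_3),
\end{equation}
while
\begin{equation}
  T_1 \cdot (T_2 \cdot T_3) \;=\; T_1 \leftarrow w_{\row}\bigl(T_2 \leftarrow w_{\row}(T_3)\bigr).
\end{equation}
Applying the displayed identity above with $S = T_1$, $T = T_2$, $w = w_{\row}(T_3)$ equates the two right-hand sides, finishing the proof.

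The main obstacle is the bridge lemma: verifying that a single Schensted bump of one letter into an SST changes the row word only by a sequence of elementary Knuth moves. Everything else is formal once the plactic-monoid framework is in place, but this row-by-row analysis of the bumping path is the genuine content and must be handled carefully (and is the place where the hypothesis that $T$ is semi-standard, rather than merely a filling, is actually used).
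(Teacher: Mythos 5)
Your proposal is correct and is precisely the standard plactic-monoid argument (Knuth equivalence, the bridge lemma relating $w_{\row}(T \leftarrow w)$ to $w_{\row}(T)$ concatenated with $w$, then formal assembly) found in Fulton's book, which is exactly where the paper defers for this proof rather than giving its own. The only step you should make explicit when writing it out is the fact that $\emptyset \leftarrow w_{\row}(S) = S$, which you use implicitly to reduce insertion into $S$ to insertion into the empty tableau.
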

See for example \cite{Fu} for a proof of Proposition 2.1.
Two different words can give rise to the same SST by the Schensted insertion:
\begin{align}
    \varnothing \leftarrow 132 ~ =  ~
  \begin{ytableau}
    1 & 2  \\
    3
  \end{ytableau}, \quad
  \varnothing \leftarrow 312 ~ =  ~
\begin{ytableau}
  1 & 2  \\
  3
\end{ytableau}.
\end{align}
\begin{defi}
  The following two transformations (and their inverse) for three consecutive letters in a word are called \textit{elementary Knuth transformation}:
  \begin{align}
    &yzx \mapsto yxz, \quad x < y \leq z, \label{k1} \\
    &xzy \mapsto zxy, \quad x \leq y < z. \label{k2}
  \end{align}
  When two words $w$ and $w'$ are transformed into each other by a finite sequence of elementary Knuth transformations, we say that $w$ and $w'$ are \textit{Knuth equivalent}.
\end{defi}
\begin{prop}
  The two words $w$ and $w'$ are Knuth equivalent if and only if $\varnothing \leftarrow w$ and  $\varnothing \leftarrow w'$ give the same SST.
\end{prop}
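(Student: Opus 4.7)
The plan is to prove the two directions of the equivalence separately.

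\emph{Forward direction} ($\Rightarrow$). Since Knuth equivalence is the transitive closure of single elementary Knuth moves, it suffices to check that one elementary Knuth move does not alter $\varnothing \leftarrow w$. Decompose $w = \alpha\, v\, \beta$, where $v$ is the three-letter Knuth pattern being transformed into $v'$, and set $T := \varnothing \leftarrow \alpha$. Since Schensted insertion processes letters left-to-right, $\varnothing \leftarrow w = (T \leftarrow v) \leftarrow \beta$ and similarly for $w'$, so it is enough to prove $T \leftarrow v = T \leftarrow v'$ for every SST $T$ and every Knuth-related pair $(v, v')$. I would establish this by direct case analysis: the inequalities $x < y \leq z$ (resp.\ $x \leq y < z$) precisely control where in each row the letters bump, and careful bookkeeping shows that the two insertion orders produce the same bumping path and the same output tableau.

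\emph{Reverse direction} ($\Leftarrow$). I would reduce this to the claim that every word $w$ is Knuth equivalent to the row word of its insertion tableau, i.e., $w \equiv_K w_{\row}(\varnothing \leftarrow w)$. Granted this, $\varnothing \leftarrow w = \varnothing \leftarrow w'$ immediately yields $w \equiv_K w_{\row}(\varnothing \leftarrow w) = w_{\row}(\varnothing \leftarrow w') \equiv_K w'$. I would prove the claim by induction on $|w|$: writing $w = u\, a$ and applying the inductive hypothesis to $u$, the problem reduces to the subsidiary lemma
\[
w_{\row}(T) \cdot a \;\equiv_K\; w_{\row}(T \leftarrow a) \qquad \text{for every SST } T \text{ and every letter } a.
\]

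I expect this subsidiary lemma to be the main obstacle, and I would prove it by a second induction on the number of rows of $T$. When the insertion of $a$ merely appends it to the bottom row (no bumps), the two sides are trivially equal. Otherwise $a$ bumps some entry $b$ from the top row, and $b$ must then be inserted into the subtableau $T'$ formed by the remaining rows. I would exhibit an explicit sequence of elementary Knuth moves that migrates $a$ through the appropriate prefix of the top row (reaching the bumped position) while simultaneously propagating $b$ to the front of $w_{\row}(T')$, after which the inductive hypothesis applied to $T'$ and $b$ completes the step. Verifying at each move that the relevant strict/weak inequalities of type \eqref{k1} or \eqref{k2} are satisfied is the delicate combinatorial content, but the defining inequalities of an SST supply the chains of inequalities needed at each transposition.
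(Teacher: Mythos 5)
Your proposal is correct, but there is nothing in the paper to compare it against: the paper states this proposition without proof (it is Knuth's classical theorem, available in the Fulton reference the paper cites for Proposition 2.1). What you have reconstructed is essentially the standard textbook argument, and both halves are sound. For the forward direction, reducing to a single elementary move and then to the claim $T \leftarrow v = T \leftarrow v'$ for an arbitrary SST $T$ and a Knuth-related triple $(v,v')$ is exactly right; the only imprecision is your remark that the two orders produce ``the same bumping path'' --- the intermediate tableaux and bumping routes genuinely differ, and what the case analysis actually establishes is only that the final tableaux agree (the standard treatment handles the first row explicitly and observes that the letters bumped into the second row again form a Knuth-related configuration, giving an induction on rows). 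For the reverse direction, the reduction to $w \equiv_K w_{\row}(\varnothing \leftarrow w)$ and the key lemma $w_{\row}(T)\cdot a \equiv_K w_{\row}(T \leftarrow a)$ is the canonical route; your row-by-row induction works once you isolate the single-row bumping lemma (if inserting $x$ into a non-decreasing row $R$ bumps $x'$ and yields $R'$, then $R\,x \equiv_K x'\,R'$), which is the explicit chain of moves you describe. So the proposal buys a self-contained proof of a fact the paper simply imports; the remaining work is the finite but delicate case-checking you have correctly flagged rather than any missing idea.
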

Let $u_1, u_2, ..., u_l \in [m]$ and $v_1, v_2, ..., v_l \in [n]$.
An array consisting of two rows
\begin{align}
  \omega = \left(\begin{array}{cccc}
   u_1&u_2&\cdots&u_l\\
   v_1&v_2&\cdots&v_l
  \end{array}\right)
\end{align}
is called a \textit{biword} if the following conditions are satisfied:
\begin{itemize}
  \item $u_1 \leq u_2 \leq \cdots \leq u_l$, and
  \item For all $1 \leq i < l$, $v_i \leq v_{i+1}$ if $u_i = u_{i+1}$.
\end{itemize}
We write the first row of $\omega$ as $u$ and the second row of $\omega$ as $v$.
The \textit{P-symbol} $P(\omega)$ is the SST obtained by $\varnothing \leftarrow v$.
The \textit{Q-symbol} $Q(\omega)$ is the SST of the same shape as $P(\omega)$ constructed as follows:
The tableau $Q(\omega)$ is obtained by adding the box with letter $u_k$ to the place where $v_k$ is inserted in $P(\omega)$.
The correspondence $\omega \mapsto (P(\omega), Q(\omega))$ is called the \textit{RSK correspondence}.
\begin{prop}
  The RSK correspondence
  \begin{align}
    \omega \mapsto (P(\omega), Q(\omega))
  \end{align}
  gives a bijection between biwords and tuples of SSTs of the same shape.
\end{prop}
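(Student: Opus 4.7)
The plan is to establish the bijection by constructing an explicit two-sided inverse, after first verifying that the forward map does take values in pairs of SSTs of the same shape. I would induct on the length $l$ of the biword.

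\emph{Well-definedness of $Q(\omega)$.} The shape of $Q(\omega)$ agrees with that of $P(\omega)$ by construction, so it remains to check that $Q(\omega)$ satisfies the SST conditions. The essential ingredient is the row-bumping lemma: when one performs two successive Schensted insertions $T \leftarrow v_k \leftarrow v_{k+1}$, the new box appearing at the second step lies strictly to the right of and weakly above the new box of the first step if $v_k \le v_{k+1}$, and strictly below and weakly to the left of it if $v_k > v_{k+1}$. Applied to a maximal block of equal first-row labels $u_k = u_{k+1} = \cdots$, the biword axiom forces $v_k \le v_{k+1} \le \cdots$, so the corresponding newly created boxes form a horizontal strip. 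This gives weak increase along rows of $Q(\omega)$. For columns, a new box of label $u_{k+1}$ sits at a previously empty position, so the box immediately above it (if any) has some earlier, hence weakly smaller, label; strict inequality then follows from the horizontal-strip observation, since boxes of equal label cannot share a column.

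\emph{Construction of the inverse.} Given $(P, Q)$ of common shape, I would peel off one biword column at a time as follows. Locate the rightmost box of $Q$ whose entry equals the maximum entry $u$ of $Q$; say it sits at position $(i,j)$. Delete it from $Q$, set $x := P_{i,j}$, delete the same box from $P$, and reverse-bump $x$ upward: in each row above, replace $x$ by the rightmost entry strictly less than $x$ and continue until some letter $v$ exits row~$1$. Prepend the column $\bigl(\begin{smallmatrix} u \\ v \end{smallmatrix}\bigr)$ to the output biword and iterate on the reduced pair. The biword axioms for the output hold by construction: the extracted $u$'s are weakly decreasing (so their reversal is weakly increasing), and within a block of equal $u$'s the extractions proceed right-to-left along the corresponding horizontal strip, which yields the required $v_k \le v_{k+1}$.

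\emph{Mutual inversion and main obstacle.} A direct induction on $l$ shows that one RSK insertion step and one reverse-bumping step undo each other; the nontrivial input is again the row-bumping lemma, which guarantees that the box added last during the RSK construction of $(P(\omega), Q(\omega))$ is exactly the rightmost occurrence in $Q(\omega)$ of the then-maximum label. The main technical obstacle is the row-bumping lemma itself: a clean proof requires following the two successive bumping paths row by row and ruling out geometric configurations that would violate the stated strict/weak inequalities between their terminal boxes. Once that lemma is in hand, both compositions of the forward and inverse maps reduce to straightforward bookkeeping, completing the bijection.
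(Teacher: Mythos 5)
The paper does not prove this proposition at all: it is quoted as a classical fact, with Fulton's \emph{Young Tableaux} given as the reference for this circle of results, so there is no ``paper proof'' to compare against. Your sketch is precisely the standard textbook argument from that source, and it is correct in outline. The three ingredients are all in the right place: (i) well-definedness of $Q(\omega)$ via the Row Bumping Lemma, with the correct statement of the dichotomy (weakly increasing insertions produce boxes strictly right/weakly above, hence a horizontal strip for each block of equal $u$-labels) and a correct derivation of column-strictness from the fact that equal labels of $Q$ cannot share a column; (ii) the inverse by reverse row insertion starting from the rightmost box carrying the maximal $Q$-entry, which is indeed the last box created because within a block the horizontal strip is filled left to right; and (iii) mutual inversion by induction on $l$. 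The only unproved ingredient is the Row Bumping Lemma itself, which you correctly flag as the technical core; since the proposition is standard and the paper itself defers to the literature, leaving that lemma as a cited black box is consistent with the level of detail the paper adopts. One minor point: the weak increase along rows of $Q(\omega)$ follows most directly from the fact that each row of a tableau only ever grows at its right end while the $u$-labels are weakly increasing over time; the horizontal-strip observation is needed only for the column condition and for identifying which box to remove in the inverse map.
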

In this paper we consider only the P-symbol.
\subsection{Piecewise-linear formula for Schensted insertion}
The Schensted insertion can be written in the form of a piecewise-linear equation.
It first appeared in \cite{Ki} and further investigated in \cite{NY} where its relation to the discrete Toda lattice was identified.
Let $v$ and $w$ be non-decreasing words consisting of letters in $[m]$.
Let $x = (x_1, x_2, ..., x_m)$ and $a = (a_1, a_2, ..., a_m)$ be \textit{coordinate representations} of $v$ and $w$, respectively, i.e.,
$x_i$ (resp. $a_i$) is the number of $i$'s in the word $v$ (resp. $w$).
The SST obtained by a Schensted insertion $w \leftarrow v$ consists of two rows, with its first row denoted by $w'$ and the second row by $v'$.
Let $y = (y_1, y_2, ..., y_m)$ and $b = (b_1, b_2, ..., b_m)$ be coordinate representations of $w'$ and $v'$, respectively.
There exists a piecewise-linear formula to compute $y$ and $b$ from $x$ and $a$. First, we define $\eta_i, ~i = 1,2, ..., m$, as
\begin{align}
  \eta_1 = y_1, \quad \eta_j = \eta_{j-1} + y_j, ~j = 2, 3, ..., m.
\end{align}
Then, $\eta_j, ~j = 1, 2, ..., m$, is expressed in terms of $x$ and $a$ as
\begin{align}
  \eta_j = \max_{1 \leq k \leq j}\{x_1 + x_2 + \cdots + x_k + a_k + a_{k+1} + \cdots + a_j \}, \label{Si}
\end{align}
by which we can recover $y_i$'s and $x_i$'s because $x_i + a_i = y_i + b_i$ holds for all $i = 1, 2, ... ,m$.
The proof of formula \eqref{Si} is given in \cite{NY}.
\subsection{Box-ball system}
In this section, we review the work by Fukuda \cite{F}, in which the P-symbol of the RSK correspondence was shown to be a conserved quantity of the generalized BBS.
First, we present the definition of the generalized BBS. Let $u = (u_i)_{i=0}^{\infty}$ be a semi-infinite sequence of letters in $[m] \cup \{e\}$ and $u_i = e$ for all but finitely many $i \in \mathbb{Z}_{\geq 0}$.
We regard the letter $e$ to be greater than any element of $[m]$.
Let $\Omega$ be the set of all such sequences.
The letter $i \in [m]$ represents a `ball of color $i$' and $e$ represents an `empty box'.
We define the map $T \colon \Omega \to \Omega$ as follows:
\begin{enumerate}
  \item Set $i := 1$.
  \item Move the leftmost ball of color $i$ to the nearest empty box on the right. Repeat this procedure for the other balls of color $i$ until all of them have been moved once.
  \item If $i = m$, then terminate. Otherwise, set $i := i+1$ and go to Step $2$.
\end{enumerate}
For an initial sequence $u^{(0)} \in \Omega$, the time evolution of the generalized BBS is defined as $u^{(t+1)} = T(u^{(t)})$.
The figure below shows an example of the time evolutions of the generalized BBS (here, the letter $e$ is replaced by an underscore symbol `$\us$'. ).
\begin{align}
  t = 0:~&\us132\us\us12\us413\us\us\us\us\us\us\us\us\us\us\us\us\us\us\us\us \\
  t = 1:~&\us\us\us312\us\us1\us2413\us\us\us\us\us\us\us\us\us\us\us\us\us\us \\
  t = 2:~&\us\us\us\us3\us12\us1\us\us2413\us\us\us\us\us\us\us\us\us\us\us\us \\
  t = 3:~&\us\us\us\us\us3\us\us121\us\us\us2413\us\us\us\us\us\us\us\us\us\us \\
  t = 4:~&\us\us\us\us\us\us3\us\us\us211\us\us\us2413\us\us\us\us\us\us\us\us \\
  t = 5:~&\us\us\us\us\us\us\us3\us\us\us2\us11\us\us\us2413\us\us\us\us\us\us \\
  t = 6:~&\us\us\us\us\us\us\us\us3\us\us\us2\us\us11\us\us\us2413\us\us\us\us
\end{align}
For $u \in \Omega$, let $f(u)$ denote a finite subsequence of $u$ obtained by removing all $e$'s.
For $u^{(0)}$ in the above example, we have $f(u^{(0)}) = 13212413$.
\begin{prop}[\cite{F}]
  For any $u \in \Omega$, the following two SST coincide:
  \begin{align}
    \varnothing \leftarrow f(u), \quad \varnothing \leftarrow f(T(u)).
  \end{align}
  That is, the P-symbol of the RSK correspondence gives a conserved quantity of the generalized BBS.
\end{prop}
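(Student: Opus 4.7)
The plan is to invoke Proposition 2.3 and reduce the claim to showing that $f(u)$ and $f(T(u))$ are Knuth equivalent as words; once this is established, the SSTs $\varnothing \leftarrow f(u)$ and $\varnothing \leftarrow f(T(u))$ automatically coincide, which is exactly the conclusion.

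I would decompose $T$ into atomic ball-moves as prescribed by the algorithm: processing colors $i = 1, 2, \ldots, m$ in ascending order, and within each color moving the leftmost unmoved ball first. A single such move sends a ball of color $i$ at a position $p$ to the nearest empty position $q$ to its right, leapfrogging the non-empty boxes with letters $w_1, \ldots, w_k$ strictly between them. The induced change on $f(u)$ is purely local: the subword $i w_1 w_2 \cdots w_k$ is replaced by $w_1 w_2 \cdots w_k i$, while the surrounding letters of $f(u)$ are unaffected.

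The crux of the argument is the claim that each such cyclic shift of a subword can be realized by a finite composition of elementary Knuth transformations \eqref{k1} and \eqref{k2}, where the ambient word $f(u)$ at the time of the move is allowed to supply context letters (either immediately before the $i$ or after the $w_k$) that are needed to make the inequalities $x < y \leq z$ or $x \leq y < z$ hold at each intermediate step. Composing the Knuth equivalences induced by all atomic moves across a full time step then yields $f(u) \sim f(T(u))$, and Proposition 2.3 concludes the proof.

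I expect the main obstacle to be verifying this claim, because the cyclic shift $i w_1 \cdots w_k \mapsto w_1 \cdots w_k i$ is generally not a Knuth equivalence in isolation: for instance $132 \not\sim 321$, yet inside the enlarged word $1321$ one can chain $1321 \to 3121 \to 3211$ using an additional context letter. The ascending-color, leftmost-first order of the BBS algorithm is precisely the constraint that guarantees such context letters are present and that a valid chain of Knuth moves can always be assembled. The argument should proceed by a case analysis on the colors of the intermediate balls $w_j$ at the moment of the move, exploiting the fact that when ball $i$ is being processed, all balls of colors strictly less than $i$ have already been relocated in earlier stages, which restricts the possible patterns of $w_j$'s encountered and forces the Knuth-type inequalities to hold after a suitable regrouping.
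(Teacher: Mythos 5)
There is a genuine gap, and it sits exactly where you predicted the ``main obstacle'' would be --- but it is worse than you anticipate: the unit of decomposition you chose (one atomic ball-move) does not preserve the Knuth class at all, and no context letters can repair this. Take $u = 12eee\cdots$. The very first atomic move sends the ball of color $1$ over the $2$ to the nearest empty box, producing $e21ee\cdots$, so $f$ changes from $12$ to $21$; these insert to tableaux of different shapes (a single row versus a single column), hence are not Knuth equivalent, and there are no further letters in the word to supply context. The equivalence is only restored after the ball of color $2$ has also moved, i.e.\ after the whole time step. So ``composing the Knuth equivalences induced by all atomic moves'' cannot work: the intermediate configurations are simply not in the right Knuth class, and your proposed step-by-step invariant already fails at the first move. (Also, the reduction you invoke is Proposition 2.2, not 2.3.)

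The paper's proof avoids this by changing the decomposition in two ways. First, it reformulates $T$ as a single left-to-right pass of a \emph{carrier}: a block of $N$ boxes swept across the configuration, which at each position either absorbs the letter it meets or exchanges it for the leftmost strictly larger letter it currently holds. This processes all colors simultaneously at each position, rather than all positions for each color, so each elementary carrier step is a local rearrangement that genuinely is a Knuth equivalence. Second --- and this is the essential point your proposal misses --- that equivalence is established for the word \emph{including} the symbols $e$, which are declared larger than every letter of $[m]$; the $e$'s inside the carrier are precisely the ``context letters'' that make the inequalities in \eqref{k1} and \eqref{k2} hold along the two chains of moves (first pushing the incoming letter $x$ leftward through the non-decreasing tail of the carrier, then evacuating the bumped letter $y$ out to the left). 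Only at the very end are the $e$'s deleted, using the fact that removing the maximal letter from a word does not disturb the positions of the remaining letters in the inserted tableau. To salvage your argument you would have to replace the atomic-move decomposition by the carrier decomposition (whose equivalence with the color-by-color rule is itself a nontrivial fact, established in Fukuda's paper) and keep the $e$'s in the word throughout.
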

Although the proof of Proposition 2.4 can be found in \cite{F}, to make this paper self-contained, we write the proof here. The idea is to realize the time evolution of the generalized BBS by successive applications of Knuth transformations.
First, let us rewrite the above time evolution rule into a \textit{carrier rule}.
We use above sequences $u^{(0)} = e132ee12e413ee\cdots$ and $u^{(1)} = T(u^{(0)}) = eee312ee1e2413ee\cdots$ as examples. First, let $N$ be the number of indices $i$ such that $u_i \neq 0$.
Let $C^{(0)} = \underbrace{ee\cdots}_{N}$ be a finite sequence consisting of $N$ copies of $e$'s, which is called a \textit{carrier}. We consider a sequence $v^{(0)} = (v_i^{(0)})_{i=0}^{\infty}$ obtained by concatenating the carrier $C^{(0)}$ to the left end of
$u^{(0)}$:
\begin{align}
  v^{(0)} &= \underline{eeeeeeee}e132ee12e413ee\cdots
\end{align}
Here, the part of the sequence $v^{(0)}$ corresponding to the carrier is underlined.
We also use the notation $v^{(0)} = C^{(0)}e132ee12e413ee\cdots$ for the same sequence.
We define $C^{(1)}$ and the sequence $v^{(1)} = v_0^{(1)}C^{(1)}v_{N+1}^{(1)}v_{N+2}^{(1)}\cdots$ from $v^{(0)}$ as follows:
Let $x$ be the letter $v^{(0)}_{N}$. In other words, $x$ is the letter to the right of the carrier $C^{(0)}$.
Then, we perform procedure (A) or (B) depending on whether there is a letter in $C^{(0)}$ greater than $x$.
\begin{enumerate}
  \item[(A)] If there is a letter in $C^{(0)}$ that is greater than $x$, we name the leftmost one $y$.
  Then, we
  \begin{enumerate}
    \item[(i)] replace $y$ with $x$, then
    \item[(ii)]  remove $y$ from the carrier and concatenate it to the left of the carrier.
  \end{enumerate}
  \item[(B)] If there is no letter in $C^{(0)}$ that is greater than $x$, we first append $x$ to the rightmost position of the carrier.
  Then, we remove the leftmost letter $y$ of the carrier and concatenate it to the left of the carrier.
\end{enumerate}
This results in a new carrier $C^{(1)}$ and a sequence $v^{(1)} = v_0^{(1)}C^{(1)}v_{N+1}^{(1)}v_{N+2}^{(1)}\cdots$.
We continue the above procedure until there are no balls to the right of the carrier and the carrier consists of $N$ copies of $e$.
An example is given below.
\begin{align}
  v^{(0)} = &~\underline{eeeeeeee}~e132ee12e413eeeeeee\cdots \quad& v^{(8)} &= eee312ee~\underline{12eeeeee}~e413eeeeeee\cdots \\
  v^{(1)} = &e~\underline{eeeeeeee}~132ee12e413eeeeeee\cdots \quad& v^{(9)} &= eee312ee1~\underline{2eeeeeee}~413eeeeeee\cdots\\
  v^{(2)} = &ee~\underline{1eeeeeee}~32ee12e413eeeeeee\cdots \quad& v^{(10)} &= eee312ee1e~\underline{24eeeeee}~13eeeeeee\cdots\\
  v^{(3)} = &eee~\underline{13eeeeee}~2ee12e413eeeeeee\cdots \quad& v^{(11)} &= eee312ee1e2~\underline{14eeeeee}~3eeeeeee\cdots\\
  v^{(4)} = &eee3~\underline{12eeeeee}~ee12e413eeeeeee\cdots \quad& v^{(12)} &= eee312ee1e24~\underline{13eeeeee}~eeeeeee\cdots\\
  v^{(5)} = &eee31~\underline{2eeeeeee}~e12e413eeeeeee\cdots \quad& v^{(13)} &= eee312ee1e241~\underline{3eeeeeee}~eeeeee\cdots \\
  v^{(6)} = &eee312~\underline{eeeeeeee}~12e413eeeeeee\cdots \quad& v^{(14)} &= eee312ee1e2413~\underline{eeeeeeee}~eeeee\cdots\\
  v^{(7)} = &eee312e~\underline{1eeeeeee}~2e413eeeeeee\cdots
\end{align}
Finally, we delete the carrier $\underline{eeeeeeee}$ from $v^{(0)}$ and $v^{(14)}$ to obtain $u^{(0)} \mapsto u^{(1)}$.
The equivalence between this procedure and the time evolution rule defined above is shown in \cite{F}.
Procedures (A) and (B) above can be realized by a sequence of elementary Knuth transformations. For (B) the assertion is obvious as it does not change sequence itself.
Let us consider the case of (A). Let  $x_1x_2\cdots x_ly z_1 z_2...z_{k-1}z_k$ be the state of the carrier.
Step (i) of the procedure (A) can be achieved via a sequence of elementary Knuth transformations \eqref{k1} ($bca \mapsto bac$ for $a < b \leq c$) as follows:
\begin{align}
  &x_1~x_2~\cdots~x_{l-1}~x_l~y~z_1~z_2~\cdots~z_{k-2}~z_{k-1}~\overline{z_k~x} \\
  &x_1~x_2~\cdots~x_{l-1}~x_l~y~z_1~z_2~\cdots~z_{k-2}~\overline{z_{k-1}~x}~z_k \\
  & \quad \quad \vdots \\
  &x_1~x_2~\cdots~x_{l-1}~x_l~y~\overline{z_1~x}~z_2~\cdots~z_{k-2}~z_{k-1}~z_k \\
  &x_1~x_2~\cdots~x_{l-1}~x_l~y~x~z_1~z_2~\cdots~z_{k-2}~z_{k-1}~z_k
\end{align}
Here, the letters whose positions are to be exchanged are marked as $\overline{a ~ b}$.
Next, we evacuate $x$ from the carrier (Step (ii) of the procedure (A)) by a sequence of elementary Knuth transformations \eqref{k2} ($acb \mapsto cab$ for $a \leq b < c$) as follows:
\begin{align}
  &x_1~x_2~\cdots~x_{l-1}~\overline{x_l~y}~x~z_1~z_2~\cdots~z_{k-1}~z_k \\
  &x_1~x_2~\cdots~\overline{x_{l-1}~y}~x_l~x~z_1~z_2~\cdots~z_{k-1}~z_k \\
  & \quad \quad \vdots \\
  &\overline{x_1~y}~x_2~\cdots~x_{l-1}~x_l~x~z_1~z_2~\cdots~z_{k-1}~z_k \\
  &y~x_1~x_2~\cdots~x_{l-1}~x_l~x~z_1~z_2~\cdots~z_{k-1}~z_k
\end{align}
Thus, two words $w = ``eeeeeeeee132ee12e413''$ and $w' = ``eee312ee1e2413eeeeeeee''$ result in the same SST according to Proposition 2.2.
Since removing the letter $e$ from a word does not affect the position of letters in $[m]$ after the Schensted insertion, we have $\varnothing \leftarrow f(w) = \varnothing \leftarrow f(w')$. This concludes the proof.
\section{Hungry $\epsilon$-BBS}
In \cite{K}, a family of box-ball systems called the $\epsilon$-BBS was introduced.
The $\epsilon$-BBS contains Takahashi-Satsuma's BBS \cite{TS} as a special case.
In this section, we first derive the discrete hungry elementary Toda orbits (d-heToda orbits) and then obtain the hungry $\epsilon$-BBS by ultradiscretizing them.
The d-heToda orbits contains a positive integer parameter $M$. When $M$ is set to $1$, then the d-heToda orbits specializes to the discrete elementary Toda orbits.
Thus, the hungry $\epsilon$-BBS is a multi-color extension of the $\epsilon$-BBS in the sense that the parameter $M$ corresponds to the number of colors of balls in the hungry $\epsilon$-BBS.
Furthermore, we present a birational transformations between different orbits of the discrete hungry elementary Toda orbits.
\subsection{Discrete hungry elementary Toda orbits}
Let $N$ be a positive integer and  $\epsilon = (\epsilon_0, \epsilon_1, ..., \epsilon_{N-1}) \in \{0, 1\}^{N}$.
We define $R^{(t)}, L_1^{(t)}$ and $L_2^{(t)}$ as
\begin{align}
  &R^{(t)} := \sum_{i=1}^{N} q^{(t)}_{i-1}E_{i, i} + \sum_{i=1}^{N-1} E_{i, i+1}, \\
  &(L_1^{(t)})^{-1} = I_N + \sum_{i=1}^{N-1} -\epsilon_{i-1} e_{i-1}^{(t)} E_{i+1, i}, \\
  &L_2^{(t)} = I_N + \sum_{i=1}^{N-1} (1-\epsilon_{i-1}) e_{i-1}^{(t)} E_{i+1, i}.
\end{align}
Let $M$ be an integer greater than or equal to $1$. We call the following equation the \textit{discrete hungry elementary Toda orbits} (d-heToda orbits):
\begin{align}
  L_1^{(t+1)}L_2^{(t+1)}R^{(t+M)} = R^{(t)}L_1^{(t)}L_2^{(t)}. \label{Lax}
\end{align}
Equation \eqref{Lax} is equivalent to the following system of equations:
\begin{align}
 &q_i^{(t+M)} = q_i^{(t)} + e_i^{(t)} - e_{i-1}^{(t+1)}, \quad i = 0, 1, ..., N-1 \label{ht1} \\
 &e_i^{(t+1)} = \cfrac{q_{i+1}^{(t)} + \epsilon_{i+1} e_{i+1}^{(t)}}{q_i^{(t+M)} + \epsilon_{i}e_{i-1}^{(t+1)}}~e_i^{(t)}, \quad i = 0, 1, ..., N-2, \label{ht2}
\end{align}
where $e_{-1}^{(t)} = 0$ for all $t$.
We regard the system of equations \eqref{ht1}, \eqref{ht2} as a time evolution
\begin{align}
  (q_i^{(t)}, q_i^{(t+1)}, ..., q_i^{(t+M-1)})_{i=0}^{N-1}, (e_i^{(t)})_{i=0}^{N-2}  \mapsto (q_i^{(t+M)}, q_i^{(t+M+1)}, ..., q_i^{(t+2M-1)})_{i=0}^{N-1}, (e_i^{(t+M)})_{i=0}^{N-2}. \\ \label{time}
\end{align}
We call an orbit of \eqref{time} through any given initial value an \textit{$\epsilon$-orbit}.

Let $X^{(t)} = L_1^{(t)} L_2^{(t)}R^{(t+M-1)}R^{(t+M-2)}\cdots R^{(t)}$. Then we have
\begin{align}
  X^{(t+M)} = (L_1^{(t)}L_2^{(t)})^{-1} X^{(t)} L_1^{(t)}L_2^{(t)}.
\end{align}
Thus the characteristic polynomial $\phi(x) = \det(X^{(t)} - x I_{N})$ is a conserved quantity of the d-heToda orbits.
The system \eqref{ht1} and \eqref{ht2} can be rewritten in the following \textit{subtraction-free} form:
\mathtoolsset{showonlyrefs=false}
\begin{align}
  &\begin{cases}
    d_i^{(t+1)} = \cfrac{q_i^{(t)}}{q_{i-1}^{(t+M)}} d_{i-1}^{(t+1)}\\
    q_i^{(t+M)} = d_{i}^{(t+1)} + e_i^{(t)}
  \end{cases}, \quad (\epsilon_{i-1}, \epsilon_{i} ) = (0, 0) \label{eq1} \\
  &\begin{cases}
    d_i^{(t+1)} =  q_{i}^{(t)} + e_i^{(t)}\\
    q_i^{(t+M)} = \cfrac{d_i^{(t+1)}}{d_{i-1}^{(t+1)}} q_{i-1}^{(t)}
  \end{cases}, \quad (\epsilon_{i-1}, \epsilon_{i} ) = (1, 1) \label{eq2} \\
  &\begin{cases}
    d_i^{(t+1)} =  q_{i}^{(t)} + e_i^{(t)}\\
    q_i^{(t+M)} = \cfrac{d_i^{(t+1)}}{q_{i-1}^{(t+M)}} d_{i-1}^{(t+1)}
  \end{cases}, \quad (\epsilon_{i-1}, \epsilon_{i} ) = (0, 1) \label{eq3} \\
  &\begin{cases}
    d_i^{(t+1)} = \cfrac{q_i^{(t)}}{d_{i-1}^{(t+1)}} q_{i-1}^{(t)}\\
    q_i^{(t+M)} = d_{i}^{(t+1)} + e_i^{(t)}
  \end{cases}, \quad (\epsilon_{i-1}, \epsilon_{i} ) = (1, 0). \label{eq4} \\
  &e_i^{(t+1)} = \cfrac{q_{i+1}^{(t)} + \epsilon_{i+1} e_{i+1}^{(t)}}{q_i^{(t+M)} + \epsilon_{i}e_{i-1}^{(t+1)}}~e_i^{(t)}, \label{eq5}
\end{align}
\mathtoolsset{showonlyrefs=true}
where $q_{-1}^{(t)}, d_{-1}^{(t)} \equiv 1$ and $e_{N-1}^{(t)} \equiv 0$.
We note that, since the right-hand sides of equations \eqref{eq1}--\eqref{eq5} do not contain any subtraction, they can be ultradiscretized.
The BBS obtained from the ultradiscretization of \eqref{eq1}--\eqref{eq5} will be discussed later.
\subsection{Birational transformations of discrete hungry elementary Toda orbits}
There is a birational transformation from the $\epsilon$-orbit for given $\epsilon \in \{0, 1\}^N$ to the $\epsilon'$-orbit for another parameter $\epsilon' \in \{0, 1\}^N$.
Such a birational transformation is considered in \cite{FG1} for the continuous case. In this section, we extend it to the discrete and hungry case.
We define $E_i^{(t)} = I_N + e_{i}^{(t)}E_{i+2,i+1}$ for $i=0, 1, ..., N-2$.
Let $I = \{i_0 < i_1 < \cdots < i_{k-1} \mid \epsilon_{i_j+1} = 0\}$.
We denote $E_{i_l}^{(t)}E_{i_{l}-1}^{(t)}\cdots E_{i_{l-1}+1}^{(t)}$ by $E_{[i_l, i_{l-1}+1]}^{(t)}$ where $i_{-1} = -1$.
Then,
\begin{align}
  L_1^{(t)} L_2^{(t)} = E_{[i_0, 0]}^{(t)}E_{[i_1, i_0+1]}^{(t)} \cdots E_{[i_{k-1}, i_{k-2}+1]}^{(t)} . \label{LE}
\end{align}
\begin{ex}
  When $N = 6, \epsilon = (0,1,1,0,1,0)$, we have $i_0 = 2, i_1 = 4$ and
  \begin{align}
    L_1^{(t)}L_2^{(t)} = E_2^{(t)} E_1^{(t)} E_0^{(t)} E_4^{(t)} E_3^{(t)}.
  \end{align}
\end{ex}
Suppose there is an index $i$ such that $\epsilon_i = 0, \epsilon_{i+1} = 1$.
Define $\epsilon' = (\epsilon'_0, \epsilon'_1, ..., \epsilon'_{N-1})$ as
\begin{align}
  \epsilon'_j = \begin{cases}
      1 & j = i, \\
      0 & j = i + 1,\\
      \epsilon_j & {\rm otherwise}
  \end{cases}
\end{align}
Then consider the following transformations of matrices:
\begin{align}
 E_{i}^{(t)} R^{(t+M-1)}R^{(t+M-2)}\cdots R^{(t)} &= \wt{R}^{(t+M-1)}E_i^{(t, 1)}R^{(t+M-2)} \cdots R^{(t)}, \\
                                                  &= \wt{R}^{(t+M-1)}\wt{R}^{(t+M-2)}E_i^{(t, 2)} \cdots R^{(t)}, \\
                                                  &\vdots \\
                                                  &= \wt{R}^{(t+M-1)}\wt{R}^{(t+M-2)} \cdots \wt{R}^{(t)}E_i^{(t, M)}.
\end{align}
where $\wt{R}^{(t+M-j)}$ and $E_i^{(t, j)}$ are matrices of the form
\begin{align}
  &\wt{R}^{(t+M-j)} = \sum_{l=1}^{N} \wt{q}^{(t+M-j)}_{l-1}E_{l, l} + \sum_{l=1}^{N-1} E_{l, l+1}, \\
  &E_i^{(t, j)} = I_N + e_{i}^{(t, j)}E_{i+2,i+1}, \quad j = 1, 2, ..., M.
\end{align}
We denote $\wt{E}_j^{(t)} = E_j^{(t, M)}$ and $\wt{e}_j^{(t)} = e_j^{(t, M)}$.
Let us express $\wt{q}_i^{(t+M-j)}, \wt{q}_{i+1}^{(t+M-j)}$ and $\wt{e}_i^{(t)}$ by $q_i^{(t+M-j)}, q_{i+1}^{(t+M-j)}$ and $e_i^{(t)}$.
Let $e_i^{(t, 0)} := e_i^{(t)}$. Then
\mathtoolsset{showonlyrefs=false}
\begin{align}
  &e_i^{(t, j)} = \cfrac{e_{i}^{(t, j-1)}q_i^{(t+M-j)}}{e_i^{(t, j-1)} + q_{i+1}^{(t+M-j)}}, \label{B1}\\
  &\wt{q}_i^{(t+M-j)} = \cfrac{q_{i+1}^{(t+M-j)}q_i^{(t+M-j)}}{e_i^{(t, j-1)} + q_{i+1}^{(t+M-j)}}, \label{B2} \\
  &\wt{q}_{i+1}^{(t+M-j)} = e_i^{(t, j-1)} + q_{i+1}^{(t+M-j)}, \label{B3} \\
  &\wt{q}_{l}^{(t)} = q_l^{(t)}, \quad l \neq i, i+1, \label{B4} \\
  &\wt{e}_l^{(t)} = e_l^{(t)}, \quad l \neq i. \label{B5}
\end{align}
We denote the rational transformation \eqref{B1}--\eqref{B5} by $\varphi_i,~i=0, 1, ..., N-2$.
\mathtoolsset{showonlyrefs=true}
\begin{prop}
  The transformation $\varphi_i$ commutes with the time evolution of the d-hetoda orbits.
\end{prop}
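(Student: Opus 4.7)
The strategy is to lift the birational transformation $\varphi_i$ to a matrix identity on the Lax side and then verify that the Lax equation \eqref{Lax} is preserved under $\varphi_i$. Since the time evolution of the d-heToda orbits is uniquely determined by the Lax equation, the commutativity of $\varphi_i$ with the time evolution will follow.

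First I would concatenate the one-step moves $E_i^{(t,j-1)}R^{(t+M-j)}=\wt{R}^{(t+M-j)}E_i^{(t,j)}$ used in the definition of $\varphi_i$ to obtain the multi-step matrix identity
\[
  E_i^{(t)}\,\mathcal{R}^{(t)} \;=\; \wt{\mathcal{R}}^{(t)}\,\wt{E}_i^{(t)},
  \qquad \mathcal{R}^{(t)}:=R^{(t+M-1)}R^{(t+M-2)}\cdots R^{(t)},
\]
and likewise at time $t+M$. A comparison of matrix entries on both sides recovers precisely the update rules \eqref{B1}--\eqref{B3}, so this identity is equivalent to the definition of $\varphi_i$.

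Next I would analyze how the factorization \eqref{LE} of $L_1^{(t)}L_2^{(t)}$ rearranges under $\epsilon\to\epsilon'$. Since $\epsilon_i=0$ forces $i-1\in I$ and $\epsilon_{i+1}=1$ forces $i\notin I$, there exists $l$ with $i_{l-1}=i-1$ and the factor $E_i^{(t)}$ sits at the rightmost position of the block $E_{[i_l,i]}^{(t)}$. Under $\varphi_i$ we have $I'=(I\setminus\{i-1\})\cup\{i\}$, so in $\wt{L}_1^{(t)}\wt{L}_2^{(t)}$ the corresponding $\wt{E}_i^{(t)}$ attaches to the leftmost position of the neighboring block. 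Because only $e_i^{(t)}$ is modified by $\varphi_i$ (by \eqref{B5}), every other factor $E_j^{(t)}$ ($j\neq i$) is unchanged, and a short calculation using associativity of matrix multiplication together with the commutations $E_j^{(t)}E_k^{(t)}=E_k^{(t)}E_j^{(t)}$ for $|j-k|\geq 2$ yields the identity
\[
  \wt{E}_i^{(t)}\,L_1^{(t)}L_2^{(t)} \;=\; \wt{L}_1^{(t)}\wt{L}_2^{(t)}\,E_i^{(t)},
\]
along with its analogue at time $t+M$.

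Finally, I would substitute these identities into the $M$-step Lax relation
\[
  \mathcal{R}^{(t)}L_1^{(t)}L_2^{(t)} \;=\; L_1^{(t+M)}L_2^{(t+M)}\,\mathcal{R}^{(t+M)},
\]
obtained by iterating \eqref{Lax} $M$ times. Left-multiplying by $E_i^{(t)}$ and applying the two displayed identities transports the inner $E_i$ factors to the outer positions; reconciling the trailing $E_i^{(t)}$ on the left with the trailing $\wt{E}_i^{(t+M)}$ on the right and cancelling yields
\[
  \wt{\mathcal{R}}^{(t)}\,\wt{L}_1^{(t)}\wt{L}_2^{(t)} \;=\; \wt{L}_1^{(t+M)}\wt{L}_2^{(t+M)}\,\wt{\mathcal{R}}^{(t+M)},
\]
which is the $M$-step Lax relation for the $\epsilon'$-orbit. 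Commutativity $\varphi_i\circ T_\epsilon = T_{\epsilon'}\circ\varphi_i$ then follows from the uniqueness of the time evolution.

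The main technical obstacle is the last cancellation: the matrices $E_i^{(t)}$ and $\wt{E}_i^{(t+M)}$ that appear as remnants live at different times, and showing they drop out consistently amounts to a compatibility between the $\epsilon$-orbit update rule \eqref{eq5} for $e_i^{(t+1)}$ and the $\varphi_i$-transformation. As a computational fallback, the commutativity can instead be verified directly on the subtraction-free equations \eqref{eq1}--\eqref{eq5} by a case analysis over the local pattern $(\epsilon_{i-1},\epsilon_i,\epsilon_{i+1},\epsilon_{i+2})$; but the matrix-theoretic route above is conceptually cleaner and makes transparent why the transformation is an equivariance for the Lax hierarchy.
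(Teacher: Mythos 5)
Your overall strategy --- lifting $\varphi_i$ to the matrix identities $E_i^{(t)}\mathcal{R}^{(t)}=\wt{\mathcal{R}}^{(t)}\wt{E}_i^{(t)}$ and $\wt{E}_i^{(t)}L_1^{(t)}L_2^{(t)}=\wt{L}_1^{(t)}\wt{L}_2^{(t)}E_i^{(t)}$ and playing them against the Lax relation --- is the same one the paper uses, and both identities you state are correct. However, the step you flag as ``the main technical obstacle'' is not a loose end to be tidied up afterwards; it is the entire content of the proposition, and your argument does not close it. Concretely: left-multiplying the $M$-step Lax relation by $E_i^{(t)}$ turns the left side into $\wt{\mathcal{R}}^{(t)}\wt{L}_1^{(t)}\wt{L}_2^{(t)}E_i^{(t)}$, but to process the right side $E_i^{(t)}L_1^{(t+M)}L_2^{(t+M)}\mathcal{R}^{(t+M)}$ you need the time-$(t+M)$ analogues of your identities, and those involve the matrix $E_i^{(t+M,M)}$ obtained by applying $\varphi_i$ \emph{after} the evolution --- not $E_i^{(t)}$. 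The cancellation you want is therefore equivalent to the nontrivial identity $e_i^{(t+M,M)}=e_i^{(t)}$, which is exactly what the paper proves by an induction that interleaves the single-step Lax relations \eqref{time2} with the single-step transformation relations \eqref{tr}, establishing first $\wt{E}_0^{(t+k+1)}=E_0^{(t,M-k-1)}$ and $\wt{R}^{(t+M+k)}E_0^{(t+k)}=E_0^{(t+k+1)}R^{(t+M+k)}$ for each $k$, and then $E_0^{(t+M,l)}=E_0^{(t+M-l)}$ for each $l$. You cannot extract this from the $M$-step relation alone, so as written the proof is circular at its key point.

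A second, related gap: the state of the d-heToda orbits consists of the individual factors $R^{(t+M)},\dots,R^{(t+2M-1)}$, not merely their product $\mathcal{R}^{(t+M)}$. Even granting the $M$-step Lax relation for the transformed data, ``uniqueness of the time evolution'' together with uniqueness of the LU-decomposition only identifies $\wt{L}_1^{(t+M)}\wt{L}_2^{(t+M)}$ and the full product $\wt{\mathcal{R}}^{(t+M)}$; it does not by itself identify each bidiagonal factor $\wt{R}^{(t+M+l)}$ with its counterpart obtained by evolving first and transforming afterwards. This is the other reason the paper works one time step at a time, deducing $\overline{R}^{(t+2M-l)}=\wt{R}^{(t+2M-l)}$ separately for each $l$. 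So: right framework and correct auxiliary identities, but the proof of the key equality is missing, and the ``computational fallback'' you mention is an assertion rather than an argument.
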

\begin{proof}
  Suppose $\epsilon_0 = 0$ and $I = \{i_0 < i_1 < \cdots < i_{k-1} \mid \epsilon_{i_j+1} = 0\}$. Then $L_1^{(t)}L_2^{(t)}$ has the form
  \begin{align}
    L_1^{(t)} L_2^{(t)} = E_{[i_0, 0]} E_{[i_1, i_0+1]} \cdots E_{[i_{k-1}, i_{k-2}+1]}.
  \end{align}
  We consider the case $i = 0$ and $\epsilon_{1} = 1$. In this case, $E_{[i_0, 0]}$ is the product of two or more matrices since $i_0 \geq 1$. The general case can be shown in the same way.
  From the definition of the $\epsilon$-orbits \eqref{Lax}, we have
  \begin{align}
    R^{(t+l)}E_{[i_0, 0]}^{(t+l)} E_{[i_1, i_0+1]}^{(t+l)} \cdots E_{[i_{k-1}, i_{k-2}+1]}^{(t+l)} = E_{[i_0, 0]}^{(t+l+1)} E_{[i_1, i_0+1]}^{(t+l+1)} \cdots E_{[i_{k-1}, i_{k-2}+1]}^{(t+l+1)} R^{(t+l+M)}, \label{time2}
  \end{align}
  for $l = 0, 1, ..., M-1$. From the definition of the birational transformation \eqref{B1}--\eqref{B5}, we also have
  \begin{align}
    E_0^{(t, l)}R^{(t+M-1-l)} = \wt{R}^{(t+M-1-l)}E_0^{(t, l+1)}, \quad l = 0, 1, ..., M-1, \label{tr}
  \end{align}
  where $E_0^{(t, 0)} := E_0^{(t)}$.
  We define $\wt{E}_0^{(t)} := E_0^{(t, M)}$ and $\wt{E}_l^{(t)} := E_l^{(t)}$ for $l = 1, 2, ..., N-2$.
  The time evolution of the $\epsilon'$-orbits for $\epsilon' = (1, 0, \epsilon_2, \epsilon_3, ..., \epsilon_{N-1})$ is
  \begin{align}
    \wt{R}^{(t+l)}\wt{E}_{0}^{(t+l)}\wt{E}_{[i_0, 1]}^{(t+l)} \wt{E}_{[i_1, i_0+1]}^{(t+l)} \cdots & \wt{E}_{[i_{k-1}, i_{k-2}+1]}^{(t+l)}  \\
    &=\wt{E}_{0}^{(t+l+1)} \wt{E}_{[i_0, 1]}^{(t+l+1)} \wt{E}_{[i_1, i_0+1]}^{(t+l+1)} \cdots \wt{E}_{[i_{k-1}, i_{k-2}+1]}^{(t+l+1)} \wt{R}^{(t+l+M)}, \label{trtime}
  \end{align}
  for $l = 0, 1, ..., M-1$. We define matrices $\overline{R}^{(t+M+l)}$ for $l = 0,1, ..., M-1$ and $\overline{E}_0^{(t+M)}$ by
  \begin{align}
    E_0^{(t+M, l)}R^{(t+2M-1-l)} = \overline{R}^{(t+2M-1-l)}E_0^{(t+M, l+1)}, \quad l = 0, 1, ..., M-1 \\ \label{timetr}
  \end{align}
  where $E_0^{(t+M, 0)} := E_0^{(t+M)}$ and $\overline{E}_0^{(t+M)} = E_0^{(t+M, M)}$. We also define $\overline{E}_{l}^{(t+M)} = E_{l}^{(t+M)}$ for $l = 1, 2, ..., N-2$.  We must show that $\wt{R}^{(t+M+l)} = \overline{R}^{(t+M+l)}$ for $l = 0, 1, ..., M-1$ and $\wt{E}_l^{(t+M)} = \overline{E}_l^{(t+M)}$ for $l = 0, 1, ..., N-2$.
  From \eqref{tr} and \eqref{trtime}, we have
  \begin{align}
    \wt{R}^{(t)}\wt{E}_{0}^{(t)}E_{[i_0, 1]}^{(t)} E_{[i_1, i_0+1]}^{(t)} \cdots E_{[i_{k-1}, i_{k-2}+1]}^{(t)} &= \wt{R}^{(t)}\wt{E}_{0}^{(t)}\wt{E}_{[i_0, 1]}^{(t)} \wt{E}_{[i_1, i_0+1]}^{(t)} \cdots \wt{E}_{[i_{k-1}, i_{k-2}+1]}^{(t)} \\
    &= E_0^{(t, M-1)} R^{(t)} \wt{E}_{[i_0, 1]}^{(t)} \wt{E}_{[i_1, i_0+1]}^{(t)} \cdots \wt{E}_{[i_{k-1}, i_{k-2}+1]}^{(t)} \\
    &= E_0^{(t, M-1)}\wt{E}_{[i_0, 1]}^{(t+1)} \wt{E}_{[i_1, i_0+1]}^{(t+1)} \cdots \wt{E}_{[i_{k-1}, i_{k-2}+1]}^{(t+1)} \wt{R}^{(t+M)}. \\\label{aaa}
  \end{align}
  Comparing \eqref{aaa} with \eqref{trtime}, we obtain $\wt{E}_0^{(t+1)} = E_0^{(t, M-1)}$.
  From \eqref{time2}, we have
  \begin{align}
    R^{(t)}E_{[i_0, 0]}^{(t)} E_{[i_1, i_0+1]}^{(t)} \cdots E_{[i_{k-1}, i_{k-2}+1]}^{(t)} &= R^{(t)}E_{[i_0, 1]}^{(t)} E_{[i_1, i_0+1]}^{(t)} \cdots E_{[i_{k-1}, i_{k-2}+1]}^{(t)} E_0^{(t)} \\
    &= E_{[i_0, 1]}^{(t+1)} E_{[i_1, i_0+1]}^{(t+1)} \cdots E_{[i_{k-1}, i_{k-2}+1]}^{(t+1)} R'^{(t+M)} E_0^{(t)}, \\
    &= E_{[i_0, 1]}^{(t+1)} E_{[i_1, i_0+1]}^{(t+1)} \cdots E_{[i_{k-1}, i_{k-2}+1]}^{(t+1)} E_0^{(t+1)} R^{(t+M)},
  \end{align}
  for an upper bidiagonal matrix $R'^{(t+M)}$, owing to the relation $E_{\alpha}^{(t)}E_{\beta}^{(t)} = E_{\beta}^{(t)}E_{\alpha}^{(t)}$ for $|\alpha - \beta| > 1$ and $i_0 \geq 1$. Thus we obtain
  \begin{align}
    &\wt{E}_l^{(t+1)} = E_l^{(t+1)}, \quad l = 1, 2, ..., N-2, \\
    &R'^{(t+M)} = \wt{R}^{(t+M)}, \\
    &\wt{R}^{(t+M)} E_0^{(t)} = E_0^{(t+1)}R^{(t+M)},
  \end{align}
  owing to the uniqueness of the LU-decomposition.
  By repeating this argument inductively, we obtain
  \begin{align}
    &\wt{E}_0^{(t+k+1)} = E_0^{(t, M-k-1)}, \label{bbb1} \\
    &\wt{E}_l^{(t+k+1)} = E_l^{(t+k+1)}, \quad l = 1, 2, ..., N-2, \\
    &\wt{R}^{(t+M+k)} E_0^{(t+k)} = E_0^{(t+k+1)}R^{(t+M+k)}, \label{ccc}
  \end{align}
  for $k = 0, 1, ..., M-1$. From \eqref{timetr} and \eqref{ccc}, we have
  \begin{align}
    \overline{R}^{(t+2M-1)}E_0^{(t+M, 1)} = \wt{R}^{(t+2M-1)}E_0^{(t+M-1)}.
  \end{align}
  Thus, we have $\overline{R}^{(t+2M-1)} = \wt{R}^{(t+2M-1)}$ and $E_0^{(t+M, 1)} = E_0^{(t+M-1)}$.
  By repeating this argument inductively, we obtain
  \begin{align}
    &\overline{R}^{(t+2M-l)} = \wt{R}^{(t+2M-l)}, \\
     &E_0^{(t+M, l)} = E_0^{(t+M-l)}, \label{bbb2}
  \end{align}
  for $l = 1, 2, ..., M$. From \eqref{bbb1} and \eqref{bbb2}, we have
  \begin{align}
    \overline{E}_0^{(t+M)} = E_0^{(t+M, M)} = E_0^{(t)} = E_0^{(t, 0)} = \wt{E}_0^{(t+M)}.
  \end{align}
  This concludes the proof.
\end{proof}
The inverse of the transformation \eqref{B1}--\eqref{B5} is also rational:
\mathtoolsset{showonlyrefs=false}
\begin{align}
  &e_i^{(t,j-1)} = \cfrac{ e_i^{(t,j)} \wt{q}_{i+1}^{(t+M-j)} }{e_i^{(t,j)} + \wt{q}_i^{(t+M-j)}}, \label{Bi1} \\
  &q_i^{(t+M-j)} = e_i^{(t,j)} + \wt{q}_i^{(t+M-j)}, \label{Bi2}\\
  &q_{i+1}^{(t+M-j)} = \cfrac{ \wt{q}_{i}^{(t+M-j)}\wt{q}_{i+1}^{(t+M-j)} }{e_i^{(t,j)} + \wt{q}_i^{(t+M-j)}}. \label{Bi3}
\end{align}
\mathtoolsset{showonlyrefs=true}
Note that the right-hand sides of the rational transformation \eqref{B1}--\eqref{B5} and \eqref{Bi1}--\eqref{Bi3} have no subtractions; this is important in the proof of the main result.
We also remark that successive application of the transformations \eqref{Bi1}--\eqref{Bi3} yields a time evolution of the discrete hungry elementary Toda orbits \eqref{Lax}.
\subsection{Hungry $\epsilon$-BBS}
Let $\varepsilon > 0$.
We consider the transformations of variables $q_i^{(t)} = e^{-Q_i^{(t)}/\varepsilon}, e_i^{(t)} = e^{-\wt{E}_i^{(t)}/\varepsilon}, d_i^{(t)} = e^{-D_i^{(t)}/\varepsilon}$.
By applying them to \eqref{eq1}--\eqref{eq5} and using
\begin{align}
  \lim_{\varepsilon \to +0} -\varepsilon \log(e^{-A/\varepsilon} + e^{-B/\varepsilon}) = \min(A, B),
\end{align}
we obtain the following piecewise-linear system:
\mathtoolsset{showonlyrefs=false}
\begin{align}
  &\begin{cases}
    D_i^{(t+1)} = Q_i^{(t)} + D_{i-1}^{(t+1)} - Q_{i-1}^{(t+M)}\\
    Q_i^{(t+M)} = \min(D_{i}^{(t+1)}, E_i^{(t)})
  \end{cases}, \quad (\epsilon_{i-1}, \epsilon_{i} ) = (0, 0) \label{ueq1} \\
  &\begin{cases}
    D_i^{(t+1)} =  \min(Q_{i}^{(t)}, E_i^{(t)})\\
    Q_i^{(t+M)} = D_i^{(t+1)} + Q_{i-1}^{(t)} - D_{i-1}^{(t+1)}
  \end{cases}, \quad (\epsilon_{i-1}, \epsilon_{i} ) = (1, 1) \label{ueq2} \\
  &\begin{cases}
    D_i^{(t+1)} =  \min(Q_{i}^{(t)}, E_i^{(t)})\\
    Q_i^{(t+M)} = D_i^{(t+1)} + D_{i-1}^{(t+1)} - Q_{i-1}^{(t+M)}
  \end{cases}, \quad (\epsilon_{i-1}, \epsilon_{i} ) = (0, 1) \label{ueq3} \\
  &\begin{cases}
    D_i^{(t+1)} = Q_i^{(t)} + Q_{i-1}^{(t)} - D_{i-1}^{(t+1)}\\
    Q_i^{(t+M)} = \min(D_{i}^{(t+1)}, E_i^{(t)})
  \end{cases}, \quad (\epsilon_{i-1}, \epsilon_{i} ) = (1, 0). \label{ueq4} \\
  &E_i^{(t+1)} = \min(Q_{i+1}^{(t)}, \mathcal{E}_{i+1} + E_{i+1}^{(t)}) - \min(Q_i^{(t+M)}, \mathcal{E}_{i} + E_{i-1}^{(t+1)}) + E_i^{(t)}, \label{ueq5}
\end{align}
\mathtoolsset{showonlyrefs=true}
for $i = 0, 1, ..., N-1$. Here, $\mathcal{E}_{i}$ is defined as
\begin{align}
  \mathcal{E}_{i} =
  \begin{cases}
    +\infty   & \epsilon_i = 0,\\
    0 & \epsilon_i = 1\\
  \end{cases}
\end{align}
and $D_{-1}^{(t+1)} \equiv 0$ and $E_{N-1}^{(t+1)} \equiv + \infty$.
We call the system \eqref{ueq1}--\eqref{ueq5} the \textit{ultradiscrete hungry elementary Toda orbits} (u-hToda).
We consider an auxiliary variable $E_{-1}^{(t)}$ and its time evolution
\begin{align}
  E_{-1}^{(t+1)} = \min(Q_0^{(t)}, \mathcal{E}_0 + E_0^{(t)}) + E_{-1}^{(t)}.
\end{align}
From $Q = (Q_i^{(0)}, Q_i^{(1)}, ..., Q_i^{(M-1)})_{i=0}^{N-1} \in \mathbb{Z}^{MN}_{> 0}$ and $E = (E_i^{(0)})_{i=-1}^{N-2} \in \mathbb{Z}^{N}_{\geq 0}$, we construct a sequence $u := \Phi_N(Q, E) \in \Omega$ by the following rule:
\begin{itemize}
  \item $Q_i^{(j)}$ denotes the number of balls of color $j+1$ in the $(i+1)$-th block of balls (the balls in each block are arranged in increasing order), and
  \item $E_i^{(0)}$ denotes the number of empty boxes between the $(i+1)$-st and the $(i+2)$-nd blocks of balls.
\end{itemize}
The map $\Phi_N$ is a bijection between $\mathbb{Z}_{> 0}^{NM} \times \mathbb{Z}_{\geq 0}^{N}$ and the set $\Omega_N (\subset \Omega)$  of sequences satisfying
\begin{align}
  \begin{cases}
    \des(u) = N-1 & u_0 \neq e \\
    \des(u) = N & u_0 = e
  \end{cases}
\end{align}
where $\des(u)$ is the number of descents in $u$, i.e., the number of indexes $i$ such that $u_i > u_{i+1}$.
The time evolution of the u-hToda orbits
\begin{align}
  &Q_i^{(j)} \mapsto Q_i^{(j+M)}, \quad  0 \leq i \leq N-1, ~ 0 \leq j \leq M-1, \\
  &E_i^{(0)} \mapsto E_i^{(M)}, \quad 0 \leq i \leq N-2,
\end{align}
together with
\begin{align}
  E_{-1}^{(t+1)} = E_{-1}^{(t)} + \min(Q_0^{(t)}, \mathcal{E}_{0} + E_0^{(t)})
\end{align}
coincides with the rule of the hungry $\epsilon$-BBS which will be explained below.
When $M = 1$ it coincides with the $\epsilon$-BBS introduced in \cite{K} with the nonautonomous parameter $S^{(t)}$ set to $+\infty$ for all $t \in \mathbb{Z}_{\geq 0}$ (the notation $S^{(t)}$ is adopted from \cite{K}).
The hungry $\epsilon$-BBS is a discrete dynamical system on $\Omega$ with the time evolution $T_{\epsilon} \colon \Omega \to \Omega$ defined by the following:
\begin{enumerate}
  \item Set $i := 1$.
  \item For balls of color $i$, compute a time evolution of the $\epsilon$-BBS (see \cite{K} or Appendix for the rule) as if there are no balls other than the balls of color $i$.
  \item If $i = M$, then terminate. Otherwise set $i := i + 1$ and go back to Step $2$.
\end{enumerate}
\begin{ex}
  The following is an example of time evolutions of the hungry $\epsilon$-BBS for $\epsilon = (0, 1, 0, 0)$:
  \begin{align}
  t = 0:~&\us1111222\us\us\us112233\us\us\us\us133\us\us\us\us\us11223\us\us\us\us\us\us\us\us\us\us\us\us\us\us\us\us\us\us\us\us\us\us\us\us\us\us\us\us\us\us\us\us \\
  t = 1:~&\us\us\us\us\us\us\us\us11122\us\us\us1112223\us\us1333\us\us\us\us\us\us11223\us\us\us\us\us\us\us\us\us\us\us\us\us\us\us\us\us\us\us\us\us\us\us\us\us\us \\
  t = 2:~&\us\us\us\us\us\us\us\us\us\us\us\us\us11122\us112\us\us\us11223333\us\us\us11223\us\us\us\us\us\us\us\us\us\us\us\us\us\us\us\us\us\us\us\us\us\us\us\us\us \\
  t = 3:~&\us\us\us\us\us\us\us\us\us\us\us\us\us\us\us\us\us\us122\us11112\us\us\us\us\us\us\us\us1122333\us112233\us\us\us\us\us\us\us\us\us\us\us\us\us\us\us\us\us \\
  t = 4:~&\us\us\us\us\us\us\us\us\us\us\us\us\us\us\us\us\us\us\us\us\us122\us\us\us11112\us\us\us\us\us\us\us\us\us\us12233\us\us11122333\us\us\us\us\us\us\us\us\us \\
  t = 5:~&\us\us\us\us\us\us\us\us\us\us\us\us\us\us\us\us\us\us\us\us\us\us\us\us122\us\us\us\us\us11112\us\us\us\us\us\us\us\us\us\us12233\us\us\us\us\us11122333\us
  \end{align}
  The following is an example of time evolutions of the same initial sequence, but for $\epsilon = (0, 1, 1, 0)$.
  \begin{align}
    t = 0:~&\us1111222\us\us\us112233\us\us\us\us133\us\us\us\us\us11223\us\us\us\us\us\us\us\us\us\us\us\us\us\us\us\us\us\us\us\us\us\us\us\us\us\us\us\us\us\us\us\us \\
    t = 1:~&\us\us\us\us\us\us\us\us11122\us\us\us1112223\us\us1333\us\us\us\us\us\us\us11223\us\us\us\us\us\us\us\us\us\us\us\us\us\us\us\us\us\us\us\us\us\us\us\us\us \\
    t = 2:~&\us\us\us\us\us\us\us\us\us\us\us\us\us11122\us112\us\us\us11223333\us\us\us\us\us\us\us\us11223\us\us\us\us\us\us\us\us\us\us\us\us\us\us\us\us\us\us\us\us \\
    t = 3:~&\us\us\us\us\us\us\us\us\us\us\us\us\us\us\us\us\us\us122\us11112\us\us\us\us\us\us\us\us11223333\us\us\us\us\us11223\us\us\us\us\us\us\us\us\us\us\us\us\us \\
    t = 4:~&\us\us\us\us\us\us\us\us\us\us\us\us\us\us\us\us\us\us\us\us\us122\us\us\us11112\us\us\us\us\us\us\us\us\us\us\us11223333\us\us11223\us\us\us\us\us\us\us\us \\
    t = 5:~&\us\us\us\us\us\us\us\us\us\us\us\us\us\us\us\us\us\us\us\us\us\us\us\us122\us\us\us\us\us11112\us\us\us\us\us\us\us\us\us\us\us\us112233\us1122333\us\us\us
  \end{align}
\end{ex}
\section{P-symbol as a conserved quantity of the hungry $\epsilon$-BBS}
For $u \in \Omega$, let $f(u)$ denote a finite subsequence of $u$ obtained by removing all $e$'s.
For $u^{(0)}$ in the above example, we have $f(u^{(0)}) = 111122211223313311223$.
The purpose of this section is to prove the following proposition.
\begin{prop}
  For any $u \in \Omega$ and $\epsilon \in \{0, 1\}^N$, two SSTs, $\emptyset \leftarrow f(u)$ and $\emptyset \leftarrow f(T_{\epsilon}(u))$,
  coincide.
\end{prop}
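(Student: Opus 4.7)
The plan is to exploit the remark at the end of Section 3.2 that one full time step of the d-heToda orbits can be realized as a composition of the inverse birational transformations $\varphi_i^{-1}$ defined by \eqref{Bi1}--\eqref{Bi3}. After ultradiscretization, this factors one step of $T_\epsilon$ into a sequence of local piecewise-linear moves, each of which is the tropicalization of a single $\varphi_i^{\pm 1}$. Hence it suffices to show that the ultradiscretization of $\varphi_i$ preserves the P-symbol $\emptyset \leftarrow f(u)$; Proposition 4.1 will then follow immediately by composing these local invariances.

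The first step is the ultradiscretization itself. Because both the transformation \eqref{B1}--\eqref{B5} and its inverse \eqref{Bi1}--\eqref{Bi3} are subtraction-free, the substitution $q, e, d \mapsto e^{-\bullet/\varepsilon}$ followed by the limit $\varepsilon \to +0$ yields well-defined piecewise-linear maps on the coordinates $(Q_i, E_i)$. Through the bijection $\Phi_N$ of Section 3.3, this induces an explicit transformation of sequences in $\Omega_N$ whose action is localized to the triple $(Q_i, Q_{i+1}, E_i)$, and it is exactly this local piecewise-linear transformation whose effect on $f(u)$ must be analyzed.

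The core new ingredient is to show that this tropicalized $\varphi_i$ preserves the P-symbol: $\emptyset \leftarrow f(u) = \emptyset \leftarrow f(\varphi_i(u))$. The natural tool is the piecewise-linear Schensted insertion formula \eqref{Si} of Section 2.2, which expresses each row of the insertion tableau as a tropical maximum of cumulative sums of coordinate representations. The aim is to verify by direct $(\min, +)$-calculation that the local action of the tropicalization of \eqref{B1}--\eqref{B5} on $(Q_i, Q_{i+1}, E_i)$ leaves every such maximum unchanged, so that all rows of the insertion tableau agree before and after the transformation. This realizes the claim made in the introduction that Noumi--Yamada's geometric Schensted insertion is invariant under $\varphi_i$, and, applied row by row, yields invariance of the full P-symbol.

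The main obstacle is this last tropical calculation: since \eqref{Si} involves a maximum whose argmax can shift when $(Q_i, Q_{i+1}, E_i)$ are modified, a careful case analysis is needed. I expect the analysis to split along the four cases $(\epsilon_{i-1}, \epsilon_i) \in \{0,1\}^2$ that already structure \eqref{eq1}--\eqref{eq4}, with the subtraction-freeness of $\varphi_i$ keeping the tropical expressions well-defined on each branch. Once this local invariance is in hand, the proof assembles essentially formally: factor $T_\epsilon$ into a sequence of tropicalized $\varphi_i^{-1}$'s via the remark from Section 3.2, apply the local invariance to each factor, and conclude.
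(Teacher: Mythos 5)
Your global strategy is sound and in fact coincides with the route the paper itself sketches in the closing remark of Section 4: factor one time step of $T_\epsilon$ into a composition of the (tropicalized) inverse transformations \eqref{Bi1}--\eqref{Bi3}, so that everything reduces to showing that each $\wt{\varphi}_i$ preserves the P-symbol (the paper's Proposition 4.2). The paper's \emph{primary} route is slightly different: it uses the $\varphi_i$'s only to conjugate an arbitrary $\epsilon$-orbit to the $\epsilon_0=(0,\dots,0)$-orbit, where conservation of the P-symbol is already known from Fukuda's carrier/Knuth-transformation argument (Proposition 2.4); your route has the advantage of bypassing Proposition 2.4 entirely. Either way, the entire weight of the proof rests on the local invariance of the P-symbol under $\wt{\varphi}_i$, and this is where your plan diverges from the paper and where it is incomplete.

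Two concrete issues. First, formula \eqref{Si} only computes the insertion of one non-decreasing word into another, i.e.\ the two-row case; to apply it you must first reduce the general $u\in\Omega$ to a word of the two-block form \eqref{seq1}. The paper does this via the associativity of the tableau product (Proposition 2.1), using that $\varphi_i$ only touches the $i$-th and $(i{+}1)$-st blocks; your phrase ``applied row by row'' does not supply this reduction, and without it the invariance of the single quantity $\eta_n$ does not yet give invariance of the full P-symbol. Second, and more seriously, the ``careful case analysis'' of how the argmax in \eqref{Si2} shifts under \eqref{ub1}--\eqref{ub3} is exactly the step the paper is engineered to avoid: the paper \emph{inverse-ultradiscretizes} $\eta_n$ to the subtraction-free rational function $\trop^{-1}(\eta_n)$, introduces the auxiliary quantities $\alpha_i,\beta_i$, and proves a polynomial identity (Lemma 4.1 and Proposition 4.3) showing $\beta_i$ is invariant under the birational $\varphi_0$; since both sides are subtraction-free, the identity descends to the tropical one with no case analysis at all. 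This is why the paper repeatedly stresses that \eqref{B1}--\eqref{B5} and \eqref{Bi1}--\eqref{Bi3} contain no subtractions. Your proposed direct $(\min,+)$ verification is not wrong in principle, but it is the hard content of the theorem and is left entirely unexecuted (and your suggested case split over $(\epsilon_{i-1},\epsilon_i)$ is not the relevant one --- $\varphi_i$ is a single formula independent of $\epsilon$; the cases that matter are which index attains the maximum in \eqref{Si2}). To complete the proof along your lines you should either carry out that tropical case analysis in full, or, better, adopt the paper's detropicalization and prove the rational identity.
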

Proposition 4.1 gives conserved quantities of the hungry $\epsilon$-BBS.
To prove Proposition 4.1, we use the birational transformation described in Section 3.2.
We ultradiscretize \eqref{B1}--\eqref{B3}
to obtain
\mathtoolsset{showonlyrefs=false}
\begin{align}
  &E_i^{(t,j)} = E_i^{(t, j-1)} + Q_i^{(t+M-j)} - \min(E_i^{(t, j-1)}, Q_{i+1}^{(t+M-j)}), \label{ub1} \\
  &\wt{Q}_i^{(t+M-j)} = Q_{i+1}^{(t+M-j)} + Q_i^{(t+M-j)} - \min(E_i^{(t, j-1)}, Q_{i+1}^{(t+M-j)}),  \label{ub2} \\
  &\wt{Q}_{i+1}^{(t+M-j)} = \min(E_i^{(t, j-1)}, Q_{i+1}^{(t+M-j)}).  \label{ub3}
\end{align}
\mathtoolsset{showonlyrefs=true}
We also consider transformation from $(1, \epsilon_1, \epsilon_2, ...)$-orbit to $(0, \epsilon_1, \epsilon_2, ...)$-orbit as
\begin{align}
  \wt{E}_{-1}^{(t)} = E_{-1}^{(t)} - Q_0. \label{b-1}
\end{align}
The transformation \eqref{b-1} also commute with the h-uToda orbits. We denote the (tropical) birational transformation \eqref{ub1}--\eqref{ub3} act on a pair $(\epsilon_i, \epsilon_{i+1})$ by the same symbol $\varphi_{i}$ for $i = 0, ..., N-2$,
 and define $\varphi_{-1}$ as \eqref{b-1}. The transformation $\varphi_i$ acting on $u \in \Omega_N$ is given by $\wt{\varphi}_i(u) := \Phi_N \circ \varphi_i \circ \Phi^{-1}_N(u)$.
To prove Proposition 4.1, it is sufficient to show the following:
\begin{prop}
  For all $u \in \Omega$ and $i = -1, 0, ..., N-2$, two SSTs $\emptyset \leftarrow f(u)$ and $\emptyset \leftarrow f(\wt{\varphi}_i(u))$ coincide.
\end{prop}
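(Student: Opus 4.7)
The plan is to reduce Proposition 4.2 to a local identity on two adjacent blocks and then verify that identity via the Noumi--Yamada piecewise-linear formula \eqref{Si}. The case $i = -1$ is immediate: $\wt{\varphi}_{-1}$ only alters $E_{-1}^{(t)}$, which counts the empty boxes preceding the first block, so neither the values nor the relative ordering of any non-empty letter changes and $f(\wt{\varphi}_{-1}(u)) = f(u)$. For $i \in \{0, 1, \ldots, N-2\}$, the tropical rules \eqref{ub1}--\eqref{ub3} modify only $Q_i^{(j)}, Q_{i+1}^{(j)}$ (for $j = 0, \ldots, M-1$) and the separator $E_i^{(0)}$; hence $f(u)$ and $f(\wt{\varphi}_i(u))$ differ only inside the consecutive substring formed by the $(i+1)$-th and $(i+2)$-th blocks, which we denote $B_{i+1}B_{i+2}$. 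Using the associativity of the SST product (Proposition 2.1) together with $P(B_\ell) = B_\ell$ for a non-decreasing word $B_\ell$, the local identity $P(B_{i+1}B_{i+2}) = P(\wt{B}_{i+1}\wt{B}_{i+2})$ implies $\varnothing \leftarrow f(u) = \varnothing \leftarrow f(\wt{\varphi}_i(u))$.

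Both sides of the local identity are at-most-two-row SSTs, obtained by Schensted-inserting a non-decreasing word into a 1-row SST. Applying the formula \eqref{Si} with $x_k = Q_i^{(k-1)}$ and $a_k = Q_{i+1}^{(k-1)}$, the color-$j$ count in the first row of $P(B_{i+1}B_{i+2})$ is $y_j = \eta_j - \eta_{j-1}$, where
\[
\eta_j = \max_{1 \leq k \leq j}\bigl\{Q_i^{(0)} + \cdots + Q_i^{(k-1)} + Q_{i+1}^{(k-1)} + \cdots + Q_{i+1}^{(j-1)}\bigr\},
\]
and the second-row count is $b_j = Q_i^{(j-1)} + Q_{i+1}^{(j-1)} - y_j$. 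The tropical rules \eqref{ub2}--\eqref{ub3} yield $\wt{Q}_i^{(k)} + \wt{Q}_{i+1}^{(k)} = Q_i^{(k)} + Q_{i+1}^{(k)}$ for every $k$, so the total letter count per color is preserved, and the local identity reduces to showing $\wt{\eta}_j = \eta_j$ for $j = 1, \ldots, M$, where $\wt{\eta}_j$ is defined by the analogous formula with tilded variables.

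The verification of $\wt{\eta}_j = \eta_j$ is the technical heart of the proof. A single step of the cascade \eqref{ub1}--\eqref{ub3}, which performs a tropical $(\min,\max)$-swap of the pair $(E_i^{(0,r-1)}, Q_{i+1}^{(M-r)})$, does not in general preserve $\eta_j$; only the full composition over $r = 1, \ldots, M$ does. I plan to establish preservation by induction on $r$, using an invariant that expresses the original $\eta_j$ as a tropical maximum of a hybrid formula mixing already-updated $\wt{Q}$'s (for $l \geq M-r$) with untouched $Q$'s (for $l < M-r$), corrected by a term involving the intermediate $E_i^{(0,r)}$, which carries the residual contribution from the colors yet to be processed. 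The main obstacle is identifying the precise form of this invariant so that it is preserved by a single update \eqref{ub1}--\eqref{ub3}; once identified, each inductive step reduces to a routine tropical manipulation relying on the identity $\min(a,b) + \max(a,b) = a + b$. At the terminal step $r = M$ all variables are tilded and the invariant yields $\wt{\eta}_j = \eta_j$, which finishes the local identity and hence the proof.
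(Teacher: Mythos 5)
Your reduction is sound and matches the paper's own strategy: the case $i=-1$ only shifts leading empty boxes so $f$ is unchanged; associativity of the tableau product (Proposition 2.1) localizes the problem to the two adjacent blocks $B_{i+1}B_{i+2}$; the Noumi--Yamada formula \eqref{Si} turns the local identity into $\wt{\eta}_j = \eta_j$ for $j=1,\ldots,M$; and the sum preservation $\wt{Q}_i^{(k)}+\wt{Q}_{i+1}^{(k)} = Q_i^{(k)}+Q_{i+1}^{(k)}$ takes care of the second row. Up to this point you have reproduced the paper's setup, which is exactly the content of \eqref{seq1} and \eqref{Si2}.

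The gap is in the step you yourself flag as the technical heart: you have not exhibited the invariant for your induction on the cascade index $r$, and without it there is no proof. The claim that ``once identified, each inductive step reduces to a routine tropical manipulation'' is a hope rather than an argument; since, as you correctly observe, a single step of \eqref{ub1}--\eqref{ub3} does \emph{not} preserve $\eta_j$, everything rests on the unstated hybrid expression, and $(\min,+)$ computations give you no cancellation to fall back on if the guess is even slightly off. The paper sidesteps precisely this difficulty by passing to the geometric lifting: it writes the \emph{fully composed} transformation in closed form, $\wt{q}_0^{(i)} = q_1^{(i)}q_0^{(i)}\alpha_{M-2-i}/\alpha_{M-1-i}$ and $\wt{q}_1^{(i)} = \alpha_{M-1-i}/\alpha_{M-2-i}$ with $\alpha_i$ given by the recursion \eqref{alpha}, introduces $\beta_i$ by \eqref{beta} so that $\trop^{-1}(\eta_n)$ becomes a ratio of subtraction-free expressions in the $\beta$'s, and then proves $\wt{\beta}_i=\beta_i$ by induction on $i$ using the bridge identity \eqref{rel}. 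The intermediate algebra there freely uses subtraction (e.g.\ $\alpha_{M-2-i}q_1^{(i)} = \alpha_{M-1-i} - e_0\prod_{j=i+1}^{M-1}q_0^{(j)}$), which is permissible because only the final subtraction-free identity needs to be tropicalized --- and that is exactly the freedom your all-tropical induction forgoes. To complete your route you would either have to find and verify the hybrid invariant explicitly, or do what the paper does and detropicalize first, prove the rational identity, and descend.
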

It is easy to see that for any $\epsilon \in \{0, 1\}^N$, there is a sequence $i_1, i_2, ..., i_k \in \{-1, 0, ..., N-2\}$ such that
$\wt{\varphi} := \wt{\varphi_{i_k}} \circ \cdots \circ \wt{\varphi}_{i_2} \circ \wt{\varphi}_{i_1}$ is the transformation from the $\epsilon$-orbit to the $\epsilon_0$-orbit where $\epsilon_0 = (0, 0, ..., 0)$.
Thus, by combining Proposition 2.4 with Proposition 4.2, we obtain Proposition 4.1. Let us prove Proposition 4.2. As the product for SSTs is associative (Proposition 2.1), it is sufficient to show that Proposition 4.2 holds for a sequence of the following form:
\begin{align}
  u = \underbrace{11...1}_{Q_0^{(0)}}~\underbrace{22...2}_{Q_0^{(1)}}~...~\underbrace{MM...M}_{Q_0^{(M-1)}}~\underbrace{ee...e}_{E_0}~\underbrace{11...1}_{Q_1^{(0)}}~\underbrace{22...2}_{Q_1^{(1)}}~...~\underbrace{MM...M}_{Q_1^{(M-1)}}~ee~... \label{seq1}
\end{align}
Let $(Q, E) = \Phi_2^{-1}(u)$.
We will prove that
\begin{align}
  \eta_n = \max_{1 \leq k \leq n} \left\{ Q_0^{(0)} + Q_0^{(1)} + \cdots + Q_0^{(k-1)} + Q_1^{(k-1)} + \cdots + Q_1^{(n-1)}  \right\}, \label{Si2}
\end{align}
for $n = 1, 2, ..., M$ is conserved under $\wt{\varphi}_0$.
First we introduce a notion of the \textit{inverse-ultradiscretization}.
The inverse-ultradiscretization is an operation of replacing $(\min, +)$ into $(+, \times)$ as
\begin{align}
  &\min(A, B) \mapsto a + b, \label{invult1} \\
  &A + B \mapsto ab. \label{invult2}
\end{align}
New variables obtained by this operation are called the \textit{geometric liftings} of the original variables.
For example, in \eqref{invult1} and \eqref{invult2}, variables $a$ and $b$ are geometric liftings of $A$ and $B$, respectively.
We perform the inverse-ultradiscretization of \eqref{Si2} to obtain
\begin{align}
  \trop^{-1}(\eta_n) = \cfrac{\prod_{i=1}^n a_i}{\sum_{j=1}^n \prod_{i = 1, i\neq j}^n a_i} \label{inv}
\end{align}
where $a_k$ is
\begin{align}
  a_k = \prod_{i=0}^{k-1} q_0^{(i)} \prod_{j=k-1}^{n-1} q_1^{(j)},
\end{align}
and $q_i^{(j)}$ and $e_0$ denote geometric liftings of $Q_i^{(j)}$ and $E_0$, respectively. We call \eqref{inv} the geometric Schensted-insertion.
We define $\beta_k, ~k = 0, 1, ..., M-1$ as
\begin{align}
  &\beta_{i} = \beta_{i-1} q_0^{(i)} + \prod_{j=0}^{i-1} q_1^{(j)},  \label{beta}\\
  &\beta_0 = 1.
\end{align}
Then \eqref{inv} is written as
\begin{align}
  \trop^{-1}(\eta_n) = \frac{q_0^{(0)}q_0^{(1)}\cdots q_0^{(n-1)} q_1^{(0)}q_1^{(1)} \cdots q_1^{(n-1)}}{\beta_{n-1}}.
\end{align}
As $q_0^{(j)}q_1^{(j)}, ~j = 0, 1, ..., N-1$, is conserved by the transformation $\varphi_0$, it suffice to show that $\beta_n$ is unchanged by the transformation $\varphi_0$.
We also define $\alpha_i,~i = 0, 1, ..., M-1$, by
\begin{align}
  &\alpha_i = \alpha_{i-1} q_1^{(M-i-1)} + e_0\prod_{j=0}^{i-1} q_0^{(M-j-1)}, \label{alpha} \\
  &\alpha_0 = e_0 + q_1^{(M-1)},
\end{align}
With $\alpha_i,~i = 0, 1, ..., M-1$, variables $\wt{q}_0^{(i)}, \wt{q}_1^{(i)}, ~i=0, 1, ..., M-1$ are written as
\begin{align}
  \wt{q}_0^{(i)} = \frac{q_1^{(i)}q_0^{(i)}\alpha_{M-2-i}}{\alpha_{M-1-i}}, \quad \wt{q}_1^{(i)} = \frac{\alpha_{M-1-i}}{\alpha_{M-2-i}} \label{q}
  \end{align}
  We use the following relation between $\alpha_i$'s and $\beta_i$'s.
  \begin{lemm}
    For $i = 0, 1, ..., M-1$, we have
    \begin{align}
      \alpha_{M-1} = \alpha_{M-1-i} \prod_{j=0}^{i-1} q_1^{(j)} + \beta_{i-1} e_0\prod_{j=i}^{M-1} q_0^{(j)}, \label{rel}
    \end{align}
    where $\beta_{-1} = 0$.
  \end{lemm}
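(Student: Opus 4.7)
The plan is to prove the identity by induction on $i$, using the defining recurrence \eqref{alpha} of $\alpha$ to step down on the factor $\alpha_{M-1-i}$ and the defining recurrence \eqref{beta} of $\beta$ to collapse the resulting boundary term.

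For the base case $i = 0$, the right-hand side reduces to $\alpha_{M-1}\cdot 1 + \beta_{-1}\,e_0\prod_{j=0}^{M-1} q_0^{(j)} = \alpha_{M-1}$, since $\beta_{-1}=0$ and the empty product $\prod_{j=0}^{-1} q_1^{(j)}$ equals $1$. This is immediate.

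For the inductive step, I would first rewrite \eqref{alpha} at index $M-1-i$ as
\[
\alpha_{M-1-i} = \alpha_{M-2-i}\, q_1^{(i)} + e_0 \prod_{j=i+1}^{M-1} q_0^{(j)},
\]
which requires reindexing the product $\prod_{j=0}^{M-2-i} q_0^{(M-j-1)}$ appearing in the definition. Substituting this into the right-hand side of the identity at $i$ and distributing the factor $\prod_{j=0}^{i-1} q_1^{(j)}$, I would collect the two summands proportional to $e_0 \prod_{j=i+1}^{M-1} q_0^{(j)}$. Their combined coefficient is $\prod_{j=0}^{i-1} q_1^{(j)} + \beta_{i-1}\,q_0^{(i)}$, which is exactly $\beta_i$ by \eqref{beta}. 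The remaining piece becomes $\alpha_{M-1-(i+1)} \prod_{j=0}^{i} q_1^{(j)}$, so combining these yields precisely the identity at $i+1$.

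The only real difficulty is the index bookkeeping: one must carefully re-express $\prod_{j=0}^{M-2-i} q_0^{(M-j-1)}$ as $\prod_{j=i+1}^{M-1} q_0^{(j)}$ and peel off the single $q_0^{(i)}$ factor from $\prod_{j=i}^{M-1} q_0^{(j)}$ in the $\beta_{i-1}$ term, so that $q_0^{(i)}$ is free to combine with $\beta_{i-1}$ and trigger the $\beta$ recurrence. Once this alignment is achieved the algebra is routine and the induction closes in one line.
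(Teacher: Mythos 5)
Your proposal is correct and follows essentially the same route as the paper's own proof: induction on $i$, expanding $\alpha_{M-1-i}$ via the recurrence \eqref{alpha}, and recognizing the coefficient $\beta_{i-1}q_0^{(i)} + \prod_{j=0}^{i-1}q_1^{(j)}$ as $\beta_i$ via \eqref{beta}. The index bookkeeping you describe (re-expressing $\prod_{j=0}^{M-2-i} q_0^{(M-j-1)}$ as $\prod_{j=i+1}^{M-1} q_0^{(j)}$) is exactly what the paper does implicitly.
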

  \begin{proof}
    We prove \eqref{rel} by induction on $i$.
    For $i = 0$, \eqref{rel} trivially holds. Suppose \eqref{rel} holds for some $i \geq 0$.
    From \eqref{alpha}, we have $\alpha_{M-1-i} = \alpha_{M-2-i}q_1^{(i)} + e_0\prod_{j=i+1}^{M-1} q_0^{(j)}$, thus
    \begin{align}
      \alpha_{M-1} &= \alpha_{M-1-i} \prod_{j=0}^{i-1} q_1^{(j)}+ \beta_{i-1} e_0 \prod_{j=i}^{M-1} q_0^{(j)}, \\
                   &= \alpha_{M-2-i} \prod_{j=0}^{i} q_1^{(j)} + (\beta_{i-1} q_0^{(i)} + \prod_{j=0}^{i-1} q_1^{(j)})e_0\prod_{j=i+1}^{M-1} q_0^{(j)} , \\
                   &= \alpha_{M-2-i} \prod_{j=0}^{i} q_1^{(j)} + \beta_i e_0 \prod_{j=i+1}^{M-1} q_0^{(j)}.
    \end{align}
    Therefore \eqref{rel} holds for $i+1$.
  \end{proof}
\begin{prop}
  $\beta_i, ~i = 0, 1, 2, ..., M-1$ is conserved by the transformation $\wt{\varphi}_0$.
\end{prop}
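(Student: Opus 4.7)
The plan is to prove $\wt{\beta}_i = \beta_i$ by induction on $i$, where $\wt{\beta}_i$ denotes the quantity defined by the same recursion \eqref{beta} but with every $q_0^{(j)}, q_1^{(j)}$ replaced by $\wt{q}_0^{(j)}, \wt{q}_1^{(j)}$. The base case $i = 0$ holds trivially since $\wt{\beta}_0 = 1 = \beta_0$ by definition.

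For the inductive step, the first task is to rewrite $\wt{\beta}_i$ using the explicit formulas \eqref{q}. Telescoping the product $\prod_{j=0}^{i-1} \wt{q}_1^{(j)} = \prod_{j=0}^{i-1} \alpha_{M-1-j}/\alpha_{M-2-j}$ collapses to $\alpha_{M-1}/\alpha_{M-1-i}$. Combined with the induction hypothesis $\wt{\beta}_{i-1} = \beta_{i-1}$ and the formula $\wt{q}_0^{(i)} = q_1^{(i)} q_0^{(i)} \alpha_{M-2-i}/\alpha_{M-1-i}$, the recursion becomes
\begin{align}
\wt{\beta}_i \, \alpha_{M-1-i} = \beta_{i-1} \, q_0^{(i)} q_1^{(i)} \, \alpha_{M-2-i} + \alpha_{M-1}.
\end{align}
Thus it suffices to establish the identity $\beta_i \, \alpha_{M-1-i} = \beta_{i-1} \, q_0^{(i)} q_1^{(i)} \, \alpha_{M-2-i} + \alpha_{M-1}$.

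To verify this identity, I would expand $\beta_i = \beta_{i-1} q_0^{(i)} + \prod_{j=0}^{i-1} q_1^{(j)}$ and then use the $\alpha$-recursion \eqref{alpha} rewritten with shifted index as $\alpha_{M-1-i} = \alpha_{M-2-i} q_1^{(i)} + e_0 \prod_{j=i+1}^{M-1} q_0^{(j)}$. Substituting this into $\beta_{i-1} q_0^{(i)} \alpha_{M-1-i}$ splits off the term $\beta_{i-1} q_0^{(i)} q_1^{(i)} \alpha_{M-2-i}$ (which matches the right-hand side directly) plus a remainder $\beta_{i-1} e_0 \prod_{j=i}^{M-1} q_0^{(j)}$. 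Adding the leftover term $\alpha_{M-1-i} \prod_{j=0}^{i-1} q_1^{(j)}$, Lemma 4.1 collapses this sum to exactly $\alpha_{M-1}$, closing the induction.

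The main obstacle is bookkeeping: the $\alpha_i$ are defined with indices in reverse order to those of $\beta_i$, so the recursion for $\alpha_{M-1-i}$ must be derived carefully via the substitution $i \mapsto M-1-i$, and the range of the product $\prod_{j=0}^{M-2-i} q_0^{(M-j-1)}$ must be rewritten as $\prod_{j=i+1}^{M-1} q_0^{(j)}$. Once this index gymnastics is set up correctly, the cancellation via Lemma 4.1 is essentially immediate. This identity is the geometric-lifting counterpart of the assertion that the Schensted insertion quantity \eqref{Si2} is invariant under $\wt{\varphi}_0$, and feeding it back through \eqref{inv} together with conservation of each $q_0^{(j)} q_1^{(j)}$ yields Proposition 4.2 for the case $i = 0$.
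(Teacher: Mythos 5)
Your proposal is correct and follows essentially the same route as the paper's proof: induction on $i$, the telescoping product $\prod_{j=0}^{i-1}\wt{q}_1^{(j)} = \alpha_{M-1}/\alpha_{M-1-i}$, the recursion $\alpha_{M-1-i} = \alpha_{M-2-i}q_1^{(i)} + e_0\prod_{j=i+1}^{M-1}q_0^{(j)}$, and the lemma relating $\alpha_{M-1}$ to $\beta_{i-1}$. The only (immaterial) difference is that you verify the key identity by expanding $\beta_i\,\alpha_{M-1-i}$, whereas the paper rewrites the numerator $\beta_{i-1}q_0^{(i)}q_1^{(i)}\alpha_{M-2-i}+\alpha_{M-1}$ to factor out $\alpha_{M-1-i}\beta_i$ --- the same computation read in the opposite direction.
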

\begin{proof}
  We prove the assertion by induction on $i$. For $i = 0$ it is trivial. Suppose the assertion hold for $i - 1$ (that is, we have $\wt{\beta}_{i-1} = \beta_{i-1}$.).
  From \eqref{beta} and \eqref{q}, we have
  \begin{align}
    \wt{\beta}_i &= \beta_{i-1} \wt{q}_0^{(i)} + \prod_{j=0}^{i-1}\wt{q}_1^{(j)}, \\
                 &= \cfrac{\beta_{i-1}q_0^{(i)}q_1^{(i)}\alpha_{M-2-i} + \alpha_{M-1}}{\alpha_{M-i-1}} \label{eqp43}
  \end{align}
  We have $\alpha_{M-1} = \alpha_{M-1-i} \prod_{j=0}^{i-1} q_1^{(j)} + \beta_{i-1} e_0\prod_{j=i}^{M-1} q_0^{(j)}$ and $\alpha_{M-2-i}q_1^{(i)} = \alpha_{M-1-i} - e_0\prod_{j=i+1}^{M-1} q_0^{(j)}$ because of Lemma 3.3.1 and \eqref{alpha}, respectively.
  Thus, the numerator of the right-hand side of \eqref{eqp43} is transformed as
  \begin{align}
    \beta_{i-1}q_0^{(i)}q_1^{(i)}\alpha_{M-2-i} + \alpha_{M-1} &= \beta_{i-1} q_0^{(i)}(\alpha_{M-1-i} - e_0\prod_{j=i+1}^{M-1} q_0^{(j)}) + \alpha_{M-1-i}\prod_{j=0}^{i-1}q_1^{(j)} \\
    &+\beta_{i-1} e_0 \prod_{j=i}^{M-1} q_0^{(j)}, \\
                                                               &= \alpha_{M-1-i} (\beta_{i-1}q_0^{(i)} + \prod_{j=0}^{i-1} q_1^{(j)}), \\
                                                               &= \alpha_{M-1-i} \beta_{i}.
  \end{align}
  Therefore we have $\wt{\beta}_i = \beta_i$.
\end{proof}
Obviously, the inverses of transformations \eqref{ub1}--\eqref{ub3} also preserve the P-symbol.
Thus, together with the remark stated in the last sentence of Section 3.2, we also see that Proposition 4.1 follows without going through Proposition 2.4.
\section{Concluding remarks}
In this paper, we first introduced the discrete hungry elementary Toda orbits and derived the hungry $\epsilon$-BBS.
The hungry $\epsilon$-BBS contains Takahashi-Satsuma's BBS with several kind of balls as a special case.
Next, we proved that the birational transformation among the elementary Toda orbits introduced in \cite{FG1} commutes with the time evolution of the discrete hungry elementary Toda orbits.
This transformation and its inverse were written without subtraction, which is important for the proof of the main theorem of this paper.
Finally, we showed that the P-symbol of the RSK correspondence is a conserved quantity of the hungry $\epsilon$-BBS, thus generalizing the earlier work by Fukuda \cite{F}.
This follows from the fact that the birational transformation \eqref{B1}--\eqref{B5} preserves the image of geometric Schensted insertion by Noumi and Yamada \cite{NY}.

There are several problems left for future work. The linearization of the generalized $\epsilon$-BBS is possible in principle by combining
the birational transformation \eqref{B1}--\eqref{B5} and the rigged configuration map of type $A_n^{(1)}$ (see \cite{KOSTY} for the linearization of the $A_n^{(1)}$ automata). However we have not yet written
the composition of those maps explicitly in combinatorial terms.
In addition, the crystal-theoretic interpretation of the transformation \eqref{B1}--\eqref{B5} remains to be investigated.
Another natural question is whether similar extensions are possible for the BBS other than the type $A_n^{(1)}$.
\section*{Acknowledgements}
The research of KK was partially supported by Grant-in-Aid for JSPS  Fellows, 19J23445. The research of ST was partially supported by JSPS Grant-in-Aid for Scientific Research (B), 19H01792. This research was partially supported by the joint project ``Advanced Mathematical Science for Mobility Society'' of
Kyoto University and Toyota Motor Corporation.
\section*{Appendix: Time evolution rule of the $\epsilon$-BBS}
In this section, we present a time evolution rule of the $\epsilon$-BBS.
For a $01$-sequence $u = (u_i)_{i=0}^{\infty}$, we define integers $Q_0, Q_1, ..., Q_{N-1}$ and $E_{-1}, E_0, ..., E_{N-2}$ as follows:
\begin{itemize}
\item $Q_i$: the length of the $(i+1)$-st block of consecutive balls in $u$ and
\item $E_i$: the number of empty boxes between the $(i+1)$-st and the $(i+2)$-nd blocks of consecutive balls in $u$.
\end{itemize}
\begin{ex}
  For a sequence $u = 0111011001100111001110110000000000000\cdots$, we define $Q_0, Q_1, ..., Q_{N-1}$ and $E_{-1}, E_0, ..., E_{N-2}$ as follows:
  \begin{align}
    &Q_0 = 3, \quad Q_1 = 2, \quad  Q_2 = 2, \quad   Q_3 = 3, \quad   Q_4 = 3, \quad   Q_5 = 2, \\
    &E_{-1} = 1, \quad E_0 = 1, \quad E_1 = 2, \quad  E_2 = 2, \quad   E_3 = 2, \quad   E_4 = 1.
  \end{align}
\end{ex}
Let us explain the rule of the time evolution $T_{\epsilon} \colon u \mapsto T_{\epsilon}(u)$ of the $\epsilon$-BBS for a given $\epsilon \in \{0, 1\}$ using the sequence $u$ of Example 5.1 as an example.
First we prepare relevant notations.
Let $I = \{ i \in \{0, ..., N-1\} \mid \epsilon_i = 1\}$ and
elements of $I$ be $I = \{ i_0, i_1, ..., i_{K-1} \mid i_0 < i_1 < \cdots < i_{K-1} \}$, where $K = |I|$.
We set $i_{-1} = 0$ and
let $m_j = k_{2i_j}$ for $j = 0, 1, ..., K-1$.
We decompose the sequence $u$ into subsequences $v^{(j)} = (u_{m_{j-1}}, u_{m_{j-1}+1}, ..., u_{m_{j}-1})$ for $j = 0, 1, ..., K$, where
$m_{-1} = 0$ and $m_K = +\infty$. In the $01$-sequence $u$ of Example 5.1 and $\epsilon = (1, 0, 1, 0, 1, 0)$, we have $i_0 = 0, i_1 = 2$ and $i_2 = 4$. Hence, $m_0 = 1, m_1 = 9$ and $m_2 = 18$ and the decomposition of a sequence $u^{(0)}$ is
$v^{(0)} = (0)$, $v^{(1)} = (1,1,1,0,1,1,0,0), v^{(2)} = (1,1,0,0,1,1,1,0,0)$, and $v^{(3)} = (1,1,1,0,1,1,0,0,0,0,...)$.

We explain the rule of the $\epsilon$-BBS in terms of a carrier that moves from left to right.
Let $c^{(-1)} = 0$. We construct a map that takes $v^{(j)}$ and $c^{(j-1)}$ as inputs and outputs a $01$-sequence $\wt{v}^{(j)}$ and a nonnegative integer $c^{(j)}$ for $j = 0, 1, ..., K$.

First, we start with a carrier containing $c^{(j-1)}$ balls, and move the carrier from left to right until it reaches the right end of $v^{(j)}$. As the carrier passes each position, perform one of the following:
\begin{itemize}
  \item When the carrier comes across a ball, load it onto the carrier.
  \item When the carrier comes across an empty box and contains no ball, do nothing.
  \item When the carrier comes across an empty box and contains at least one ball, unload a ball.
  However, when unloading a ball for the first time in step $j$, remove $c^{(j-1)}$ balls from the carrier (this procedure is indicated by the double-lined arrow in diagrams \eqref{arrow1} and \eqref{arrow2} in Example 3.2 below).
\end{itemize}
Then, we obtain the finite $10$-sequence $(v')^{(j)}$ and the carrier contents $c^{(j)}$.
Next, we add $c^{(j-1)}$ balls into the first (leftmost) block of balls of $(v')^{(j)}$ and, if $j > 0$, delete $\max(Q_{i_{j-1}}- E_{i_{j-1}}, 0)$ boxes from the first (leftmost) block of empty boxes of $(v')^{(j)}$. We
define $\wt{v}^{(j)}$ by the resulting sequence.

After executing the above procedures for $j = 0, 1, ..., K$, we concatenate sequences $\wt{v}^{(0)}, \wt{v}^{(1)}, ..., \wt{v}^{(K)}$ to obtain $T_{\epsilon}(u) = \wt{v}^{(0)}\wt{v}^{(1)}\cdots\wt{v}^{(K)}$.

\begin{ex}
Let us give an example of the above rule for the $01$-sequence $u$ in Example 5.1 for $\epsilon = (1, 0, 1, 0, 1, 0)$.
  First, we explain the above procedure for $j = 1$ (in the case of $j = 0$, we trivially obtain $\wt{v}^{(0)} = (0)$).
  Let $c$ and $\wt{c}$ be respectively the states of the carrier before and after it passes through position $v^{(1)}_l$.
   The following diagram illustrates the changes in the state of the carrier and $v^{(1)}_l$ of the $01$-sequence before and after the carrier passes:
  \begin{align}
    &\xymatrix@C=5pt@R=5pt{
    &v^{(1)}_l\ar[dd]& \\
    c \ar[rr]&&\wt{c}  \\
    &v'^{(1)}_l&
    }
  \end{align}
The diagram below shows how the state of the carrier changes as it moves from $u_{m_0}$ to $u_{m_1-1}$:
  \begin{align}
    &\xymatrix@C=5pt@R=5pt{
    &1\ar[dd]&&1\ar[dd]&&1\ar[dd]&&0\ar[dd]&&1\ar[dd]&&1\ar[dd]&&0\ar[dd]&&0\ar[dd] \\
    0\ar[rr]&&1 \ar[rr] && 2\ar[rr]&&3 \ar[rr]&&2 \ar[rr]&&3\ar[rr]&&4\ar[rr] &&3\ar[rr] && 2 \\
    &0&&0&&0&&1&&0&&0&&1&&1
    }
  \end{align}
  After the carrier passes $u_{m_1-1}$, we obtain the sequence $v'^{(1)} = (0, 0, 0, 1, 0, 0, 1, 1)$ and the carrier contents $c^{(1)} = 2$. Then we
  delete $\max(3 - 1, 0) = 2$ empty boxes from $v'^{(1)}$ and obtain the resulting sequence $\wt{v}^{(1)} = (0, 1, 0, 0, 1, 1)$.

  Next, let us consider the cases $j = 2$ and $j = 3$.
  When $j = 2$, we have following diagram:
  \begin{align}
    \xymatrix@C=5pt@R=5pt{
    &1\ar[dd]&&1\ar[dd]&&0\ar[dd]&&0\ar[dd]&&1\ar[dd]&&1\ar[dd]&&1\ar[dd]&&0\ar[dd]&&0\ar[dd]& \\
    2\ar[rr]&&3 \ar[rr] && 4\ar@{=>}[rr]&&1 \ar[rr]&&0 \ar[rr]&&1\ar[rr]&&2\ar[rr] &&3\ar[rr] && 2\ar[rr]&& 1 \\
    &0&&0&&1&&1&&0&&0&&0&&1&&1&
    }\\\label{arrow1}
  \end{align}
  Therefore we obtain $(v')^{(2)} = (0, 0, 1, 1, 0, 0, 0, 1, 1)$ and $c^{(2)} = 1$. Because $c^{(1)} = 2$ and $\max(2 - 2, 0) = 0$, we have $\wt{v}^{(2)} = (0, 0, 1, 1, 1, 1, 0, 0, 0, 1, 1)$.
  When $j = 3$, we have the following diagram:
  \begin{align}
    &\xymatrix@C=5pt@R=5pt{
    &1\ar[dd]&&1\ar[dd]&&1\ar[dd]&&0\ar[dd]&&1\ar[dd]&&1\ar[dd]&&0\ar[dd]&&0\ar[dd]&&0\ar[dd]&&0\ar[dd]& \\
    1\ar[rr]&&2 \ar[rr] && 3\ar[rr]&&4 \ar@{=>}[rr]&&2 \ar[rr]&&3\ar[rr]&&4\ar[rr] &&3\ar[rr] && 2\ar[rr]&& 1\ar[rr]&& 0 \\
    &0&&0&&0&&1&&0&&0&&1&&1&&1&&1&
    }\\\label{arrow2}
  \end{align}
  Therefore we obtain $(v')^{(3)} = (0, 0, 0, 1, 0, 0, 1, 1, 1, 1, 0, ...)$ and $c^{(3)} = 0$. Because $c^{(2)} = 1$ and $\max(3 - 1, 0) = 2$, we have $\wt{v}^{(3)} = (0, 1, 1, 0, 0, 1, 1, 1, 1, 0, ...)$. Finally, by concatenating sequences
  \begin{align}
    &\wt{v}^{(0)} = (0), \quad \wt{v}^{(1)} = (0, 1, 0, 0, 1, 1), \quad \wt{v}^{(2)} = (0, 0, 1, 1, 1, 1, 0, 0, 0, 1, 1), \\
    &\wt{v}^{(3)} = (0, 1, 1, 0, 0, 1, 1, 1, 1, 0, 0, 0, 0, ...),
  \end{align}
  we obtain $T_{\epsilon}(u) = ``001001100111100011011001111000..."$. \qed
\end{ex}

\end{document}